\documentclass[11pt,onecolumn]{article}

\usepackage[utf8]{inputenc}
\usepackage{amssymb}
\usepackage{amsthm}
\usepackage{amsmath}
\usepackage{array}
\usepackage{booktabs}
\usepackage{graphicx}
\usepackage{empheq}
\usepackage{mathtools}
\usepackage{makecell}
\usepackage{stmaryrd}
\usepackage[vlined,linesnumbered,ruled,resetcount]{algorithm2e}
\usepackage{hyperref}
\usepackage{dsfont}
\usepackage{tikz}
\usepackage{url}
\usepackage{adjustbox}
\usepackage{enumitem}
\usepackage{cleveref}

\usepackage[left=1.0in,top=1.0in,right=1.05in,bottom=1.0in,nohead]{geometry}

\theoremstyle{plain}
\newtheorem{theorem}{Theorem}[section]

\newtheorem{definition}[theorem]{Definition}
\newtheorem{lemma}[theorem]{Lemma}

\def\slack{{\tt{slack}}}
\def\calP{{\cal P}}
\def\r{{\mathfrak r}}
\def\calB{{\cal B}}
\def\calC{{\cal C}}
\def\calG{{\cal G}}
\def\InvA{{\ensuremath{({\tt I}_0)}}}
\def\InvB{{\ensuremath{({\tt I}_1)}}}
\newcommand{\eqdef}{{\stackrel{\mbox{\tiny \tt ~def~}}{=}}}
\def\myparagraph#1{\vspace{2pt}\noindent{\bf #1~~}}

\title{Blossom VI: A Practical Minimum Weight Perfect Matching Algorithm}
\author{Pavel Arkhipov \hspace{30pt} Vladimir Kolmogorov \\ \normalsize Institute of Science and Technology Austria \\ {\normalsize\tt $\{$pavel.arkhipov,vnk$\}$@ist.ac.at}}

\begin{document}

\maketitle

\begin{abstract}
Minimum weight perfect matching is a fundamental problem in combinatorial optimization. Since 2009, Blossom V has been the leading practical implementation. We present Blossom VI, a practical algorithm that improves upon Blossom V. We test the performance of our implementation on nine benchmark families with graphs containing up to roughly 10 million edges. Blossom VI is significantly faster on the hardest tested instances, while being at most a factor of two slower on families where Blossom V already scales nearly linearly.

Blossom VI is a primal-dual solver whose primal phase computes a maximum-cardinality matching in the zero-slack subgraph using cherry trees. At the end of each primal phase, Blossom VI contracts entire cherry blossoms instead of sequences of nested traditional blossoms that would be created by Blossom V. This produces shallower contraction hierarchies and avoids costly cascades of blossom expansions. Our measurements show that reduced blossom depth and fewer expensive expansion operations can explain the observed improvement.
\end{abstract}

\section{Introduction}

We study the minimum weight perfect matching problem. Given a graph $G = (V, E)$ with edge weights $w_e$, one needs to compute a perfect matching of minimum total weight, i.e., a subset of edges $M \subseteq E$ that minimizes the total weight $\sum\limits_{e \in M} w_e$ subject to the constraint that every node has exactly one incident edge from $M$.

Minimum weight perfect matching arises as a subroutine in several applications. For example, it is used in Christofides' approximation
algorithm for the metric traveling salesperson problem
\cite{christofides2022worst}, in polynomial algorithms for maximum cut in
planar graphs \cite{hadlock_planar_maxcut}, and in quantum error correction
\cite{higgott2022pymatching, higgott2025sparse, fowler2013minimum}. These applications motivate implementations
that remain efficient on large, structured instances.

The existence of a polynomial-time algorithm for solving the minimum weight perfect matching problem is due to Edmonds \cite{edmonds_lp, edmonds_algo}. Since then, the problem has inspired a long line of research. On the theoretical side, the complexity of the algorithm has been improving through \cite{mwpm_theory_gabow, mwpm_theory_gabow2, mwpm_theory_gabow3, mwpm_theory_gabow4, mwpm_theory_gabow5, mwpm_theory_huang}. However, algorithms with good theoretical complexity are not necessarily the fastest in practice. Starting with the Blossom I implementation \cite{blossom_1}, notable implementations include Blossom IV \cite{blossom_4}, the Mehlhorn-Sch\"afer implementation \cite{mehlhorn_schafer} and Blossom V \cite{blossom_v}. In this paper, we focus on the practical implementation of the minimum weight perfect matching algorithm. 

Blossom V, to our knowledge, remains the leading general-purpose practical implementation for solving the minimum weight perfect matching problem. For many natural graph families, Blossom V runs in almost linear time, even for large graphs with millions of edges. However, it can experience a slowdown on difficult structured graphs. For example, the original paper reports this behavior for the heaviest TSP instances (\cite{blossom_v}, Table 9). In this work, we present an implementation that aims to outperform Blossom V on the hard instances. Following the tradition of naming the minimum weight perfect matching algorithms, we call our algorithm Blossom VI.

\subsection{Our contributions}

Our main contribution is the development of Blossom VI, a practical algorithm for the
minimum weight perfect matching problem. 
% We build our work upon Blossom V algorithm and the cherry-tree algorithm \cite{cherry_paper} that solves the unweighted maximum-cardinality matching problem. A version of the cherry-tree algorithm is used in every primal phase of our implementation. 
Like other primal-dual matching
algorithms, Blossom VI alternates between two types of steps. A
\emph{primal phase} improves the current matching using edges whose dual
constraints are tight, and a \emph{dual phase} changes the dual variables
so that new primal operations may become available.

Our main observation is that each primal phase can be viewed as an
unweighted maximum-cardinality matching problem. During such a phase, the dual variables
remain fixed, and the algorithm searches for matching augmentations using the edges that are currently tight. The goal of the
primal phase is therefore to find a maximum-cardinality matching using
only these edges, starting from the current matching. This viewpoint
allows us to use a variant of an efficient unweighted maximum matching algorithm based on cherry
trees \cite{cherry_paper}.

The main algorithmic difference from Blossom V is when and how odd vertex sets are contracted into supernodes. A
traditional implementation may create several nested blossoms during one
primal phase; the intermediate blossoms initially have dual
variable zero; nevertheless, they must be represented explicitly and may
later have to be expanded one after another. Blossom VI keeps this intermediate structure implicit during the primal
phase. We contract
relevant odd sets at the end of each primal phase. Thus, several nested blossom contractions that would
otherwise be created during one primal phase are replaced by one
larger contraction. The resulting blossom hierarchy is shallower, fewer
contracted objects have to be initialized, and costly cascades of blossom
expansions are avoided. 

We implement Blossom VI in C++ and compare it experimentally with the state-of-the-art practical implementation, Blossom V.
On instance families for which Blossom V shows superlinear running
time, Blossom VI is substantially faster. On families that Blossom V already
solves in nearly linear time, Blossom VI remains competitive but can be
up to a factor of two slower. Our measurements indicate that the largest
speedups coincide with shallower blossom hierarchies and
with avoiding expensive sequences of expansion operations.

Our code is available on GitHub: \url{https://github.com/Paul566/blossom-vi}.

\section{Preliminaries}

In this section, we give a brief overview of the problem and the primal-dual technique, as well as a recap of some ideas from the cherry-tree algorithm for unweighted maximum matching.

We start with notation. For a graph $G$, we use $V(G)$ and $E(G)$ to denote the sets of nodes and edges of $G$, respectively. If $G$ is directed, we write $A(G)$ for the set of arcs. $\delta(v)$ denotes the set of incident edges of a node $v$. For a set of nodes $S$, we use $\delta(S)$ for the edge boundary of $S$. $E(S)$ denotes the set of edges with both endpoints in $S$. The input graph has $n$ nodes and $m$ edges.

\subsection{Linear programming formulation and the primal-dual approach}

We give a brief overview of the linear programming formulation and the primal-dual method of solving the problem. It was originally studied by Edmonds \cite{edmonds_lp, edmonds_algo} and is classical. The minimum weight perfect matching admits a linear programming formulation. As a first attempt, one can introduce Boolean variables $x_e \in \{0, 1\}$ and state the problem as an \emph{integer} linear program: minimize $\sum\limits_{e \in E} w_e x_e$ subject to the constraints $\sum\limits_{e \in \delta(v)} x_e = 1$. However, this would be an ILP that is not a priori efficiently solvable. If one drops the integrality constraints, the solution might become non-integral. The key structural theorem is that one can enforce integrality if one adds a new family of constraints, called the \emph{blossom constraints}. Let $\mathcal{P}_\text{odd}$ be the family of node sets of odd cardinality that have more than one node. Then the minimum weight perfect matching problem admits the following primal and dual LP formulations.

\begin{equation}
    \begin{aligned}
    \text{(Primal)} \qquad 
    \min \quad & \sum_{e \in E} w_e\, x_e \\
    \text{s.t.} \quad 
    & \sum_{e \in \delta(v)} x_e = 1 
    && \forall v \in V, \\
    & \sum_{e \in \delta(S)} x_e \geq 1
    && \forall S \in \mathcal{P}_\text{odd}, \\
    & x_e \ge 0
    && \forall e \in E.
    \end{aligned}
\end{equation}
\begin{equation}
    \begin{aligned}
    \text{(Dual)} \qquad 
    \max \quad & \sum_{v \in V} y_v \;+\; \sum_{S \in \mathcal{P}_\text{odd}} y_S \\
    \text{s.t.} \quad 
    & \slack(e) \geq 0
    && \forall e \in E, \\
    & y_S \geq 0
    && \forall S \in \mathcal{P}_\text{odd}.
    \end{aligned}
\end{equation}
For an edge $e = (u, v)$, the slack is defined as 
\begin{equation}
    \label{eq_slack_def}
    \slack(e) := w_e - y_u - y_v - \sum\limits_{S \in \mathcal{P}_\text{odd} \; : \; e \in \delta(S)} y_S.
\end{equation}
The complementary slackness conditions are:
\begin{equation}
    \begin{cases}
        x_e \cdot \slack(e) = 0 \\
        y_S (x(\delta(S)) - 1) = 0
    \end{cases}
\end{equation}

The minimum weight perfect matching problem can be solved with a classical primal-dual method. At a high level, the method maintains a feasible dual solution $y$ and a not necessarily feasible integral primal solution $x$. 
The primal solution $x$ corresponds to a matching $M$ that is non-perfect in general: $x$ satisfies $x_e \geq 0$ and $\sum\limits_{e \in \delta(v)} x_e \leq 1$, but it might violate the equality constraints $\sum\limits_{e \in \delta(v)} x_e = 1$ and the blossom constraints. 
Furthermore, maintain $x$ and $y$ such that every matched edge has slack zero, and for any $S \in \calP_{\text{odd}}$ with $y_S > 0$, we have $|M \cap E(S)| = (|S| - 1) / 2$. Perform a sequence of updates for $x$ and $y$ until $x$ becomes feasible, and therefore corresponds to a perfect matching. Then, the maintained invariants imply the complementary slackness conditions. The feasibility of $x$, the feasibility of $y$, and the complementary slackness conditions imply that $x$ is optimal. 

The blossom constraints have a direct algorithmic counterpart. Traditional
primal-dual algorithms maintain a laminar family of odd-size node sets, called
\emph{blossoms}. During the primal phase, the matching is improved by \emph{augmentations}, i.e., flipping the matching on an alternating path of tight edges between two unmatched nodes. When the augmenting path search on tight edges discovers an odd alternating
cycle on zero-slack edges, the corresponding blossom is contracted into a single supernode, so
that the search can continue. The
blossom can reconstruct the matching inside itself if an
augmenting path later passes through it, and also stores the dual
variable $y_S$ associated with the contracted set $S$. Blossoms may be nested.
During dual updates their dual variables change; blossoms whose internal
structure must become visible again are expanded. For a more detailed description, see, for example, the book \cite{schrijver}, chapters 25--26.

\subsection{Unweighted maximum matching}
As a subroutine in our implementation, we solve the unweighted maximum matching problem. The objective is to find a matching of maximum cardinality in a given unweighted graph. 

Let $M$ be a matching in an unweighted graph $G$. An
\emph{alternating path} is a path whose edges alternate between edges in
$M$ and edges outside $M$. An alternating path whose two endpoints are
unmatched is called an \emph{augmenting path}. Replacing the matched
edges of such a path by its unmatched edges increases the cardinality of
$M$ by one. A matching is maximum if and only if no
augmenting path exists. 
This idea is used both in the theoretical algorithms
for maximum matching
\cite{maxmatching_theory_gabow,maxmatching_theory_vazirani,
maxmatching_theory_vazirani2,maxmatching_theory_gaussian,
maxmatching_theory_vazirani3}
and in practical implementations such as Boost~\cite{boost},
LEMON~\cite{lemon}, the cherry-tree implementation~\cite{cherry_paper},
and X-Blossom~\cite{x_blossom}.
Finding augmenting paths in general graphs is complicated by odd cycles.
Edmonds' classical solution contracts certain odd vertex sets, called
\emph{blossoms}, and continues the search in the contracted graph
\cite{edmonds_algo}.

Our primal phase builds upon the cherry-tree algorithm
\cite{cherry_paper}. Starting from the current matching, the algorithm
grows several search trees, one rooted at each unmatched
node. A node receives a plus label if the tree contains an
even-length alternating path from that node to its root, and a minus
label if it contains an odd-length such path. An edge joining two plus
nodes in two different trees gives an augmenting path and therefore
allows the matching to be enlarged.

The main distinction from a traditional blossom search is how odd
cycles are dealt with. The cherry-tree algorithm does not contract every odd cycle encountered by the search but maintains \emph{cherry blossoms} that generalize traditional blossoms. A cherry blossom is an odd-cardinality set of nodes that has a special
\emph{receptacle} node, and all other nodes in the cherry blossom have both an
even-length and an odd-length alternating path to the receptacle. This
structure contains enough information to route an augmenting path
through the cherry blossom without representing it as a hierarchy of
contracted odd cycles.

The original cherry-tree algorithm also proposes a recursive
\emph{metagraph} procedure for choosing larger collections of augmentations. The nodes of the metagraph represent cherry trees, and
its edges represent the augmentations currently available between
pairs of trees (hence the name of their paper ``Shrinking trees not blossoms''). Blossom VI does not use this procedure.
The precise definitions of plus- and minus-nodes, receptacles, cherry
trees, and cherry blossoms are given in the next section.

\section{Overview of the algorithm}\label{sec_algo_overview}

At any time, Blossom VI works on a graph obtained from the original graph by contracting some odd node sets into supernodes. The algorithm alternates between primal and dual phases.
During the primal phase, the dual variables remain fixed, and the
algorithm enlarges $M$ as much as possible using only zero-slack edges.
The search is represented by a cherry forest and uses the operations
\texttt{Grow-out}, \texttt{Grow-in}, \texttt{Augment}, and
\texttt{Expand}. Once none of these operations is applicable, the current matching has maximum cardinality on the zero-slack subgraph. After this, we apply \texttt{Shrink} operations and contract every
nontrivial cherry blossom into a supernode. This turns
the cherry forest into an ordinary alternating forest, on which we
perform a dual update. The process
continues until the matching is perfect. This procedure is summarized in Algorithm~\ref{algo_main}. We discuss it more formally below. For a discussion of correctness, see \cref{sec_correctness_proof}.

\begin{algorithm}[!h]
\caption{Blossom VI}\label{algo_main}
    Initialize feasible dual variables and a matching $M$ \\
    Make every unmatched node a root of a cherry tree \\
    \While{$M$ is not perfect}{
        Apply {\tt Grow-out}, {\tt Grow-in}, {\tt Augment}, {\tt Expand} operations while possible\\
        {\tt Shrink} all nontrivial cherry blossoms \\
        Make a dual update
    }
    \Return $M$
\end{algorithm}

\subsection{Initialization}
In the first line of Algorithm~\ref{algo_main}, we initialize the dual variables $y$ and the matching $M$. It is sufficient to take feasible $y$ with $y_S = 0$ for all non-singleton odd-size $S \subset V$ and a matching that uses tight edges. Details of our initialization are given in \cref{section_remarks}.

\subsection{Graph structure}
We maintain a graph $(V,E)$ that is obtained from the original
graph $G_{\tt orig}=(V_{\tt orig},E_{\tt orig})$ by contracting some odd-cardinality subsets of the nodes into supernodes. We store a valid matching $M$ and a cherry forest structure that we discuss below.

\myparagraph{Cherry forest} 
We maintain node sets $V^+\subseteq V$ and $V^-\subseteq V$.
Nodes in $V^+$ are called {\em plus-nodes}, and nodes in $V^-$ are {\em minus-nodes},
and nodes in $V^\pm\eqdef V^+\cap V^-$ are {\em $\pm$-nodes}. The meaning of $V^+$ and $V^-$ is that every plus-node has an even-length alternating path to some unmatched node, and every minus-node has an odd-length alternating path to some unmatched node.
Nodes in $V\setminus (V^+\cup V^-)$ without labels are called {\em unlabeled}.
A node in $V^+\setminus V^-$ is a {\em purely plus-node},
while a node in $V^-\setminus V^+$ is a {\em purely minus-node}.
Unmatched nodes are called {\em roots}; they are purely plus-nodes.

Every non-root plus-node $v$ stores a matched parent edge, called its
\emph{plus-parent}; every minus-node stores an unmatched parent edge,
called its \emph{minus-parent}. We orient each parent edge from the node
towards its parent.
For a plus-node $v$, let $\calP_v^+$ be the maximal path obtained by starting at
$v$ with its plus-parent and then following plus- and minus-parents
alternately. Define $\calP_v^-$ similarly for a minus-node.
% Given this structure, we define a directed graph $(V,A)$,
% where $A$ is the union of all plus-parents and minus-parents, viewed as directed edges.
% A path in this graph is called {\em alternating} if the matched / non-matched status of edges along this path always
% alternates. Equivalently, it follows plus- and minus-parents in an alternating order.
% For a plus-node $v$ let $\calP^+_v$ be the maximal alternating path that starts with the plus-parent of $v$;
% clearly, it is uniquely defined. Similarly, for a minus-node $v$ let $\calP^-_v$ be the maximal alternating
% path that starts with the minus-parent of $v$. 
For a simple path $\calP$ passing through nodes $x,y$, use $\calP[x\rightarrow y]$
to denote the subpath of $\calP$ from $x$ to $y$.

The algorithm will maintain the following key invariants for the cherry forest structure (defined later).
\begin{itemize}[leftmargin=25pt]
\item[\InvA] For each plus-node (resp. minus-node) $v$,  $\calP^+_v$ (resp. $\calP^-_v$) is a simple path ending at a root.
\item[\InvB] Each $\pm$-node $v$ stores a purely plus-node $\r_v\in V$ called the {\em receptacle of $v$}.
This node lies on the paths $\calP^+_v,\calP^-_v$, and 
all nodes on the paths $\calP^+_v[v\rightarrow \r_v]$ and $\calP^-_v[v\rightarrow \r_v]$
excluding $\r_v$ are $\pm$-nodes. 

%\item[\InvB] If $v$ is a $\pm$-node then paths $\calP^+_v$ and $\calP^-_v$ end at the same root node.
%Furthermore, if $(x,y)$ is the first common (directed) arc of $\calP^+_v,\calP^-_v$ then $x$ is a plus-node, and all nodes of paths $\calP^+_v[v\rightarrow x],\calP^-_v[v\rightarrow x]$
%excluding $x$
%are $\pm$-nodes. If $\calP^+_v,\calP^-_v$ have no common arc then we take $x$ to be the endpoint of $\calP^+_v,\calP^-_v$ here.
\end{itemize}

We use the cherry-tree framework of~\cite{cherry_paper}, adapted to
graphs containing contracted supernodes. We restate the definitions
needed in our setting.~\footnote{Our definitions are actually different from those in~\cite{cherry_paper} (but encode the same data structure). We chose to use them because they involve fewer conditions than~\cite{cherry_paper}. Note that our conditions rely on ``global'' properties of paths $\calP^+_v$ and $\calP^-_v$, whereas the conditions in~\cite{cherry_paper} are purely ``local''.}

\begin{definition}[cherry forest, cherry tree, cherry blossom]
    A data structure satisfying \InvA\ and \InvB\ is a
\emph{cherry forest}. The \emph{cherry tree} rooted at $r$ consists
of all labeled nodes whose parent paths end at $r$. For a purely
plus-node $x$, its \emph{cherry blossom} is defined as $\calB_x = \{v\in V^\pm:\r_v=x\} \cup \{x\}$.
\end{definition}
This terminology comes from~\cite{cherry_paper}. In case $\calB_x = \{x\}$, the singleton $\{x\}$ is regarded as a trivial cherry blossom.

\myparagraph{Supernodes} 
% The current graph $G=(V,E)$ is obtained from the original
% graph $G_{\tt orig}=(V_{\tt orig},E_{\tt orig})$ by contracting some subsets of the nodes of odd cardinality to supernodes. 
% Every node $v$ of the current graph represents an odd-size set
% $S \subset V_{\tt orig}$. For an original node $v$, this set is a singleton $\{v\}$.
% A node representing more than one original vertex is
% called a \emph{supernode}. The sets represented by the current
% top-level nodes partition $V_{\tt orig}$.
The nodes of the current contracted graph are called \emph{top-level
nodes}; nodes stored inside a supernode are called \emph{internal
nodes}. Each top-level node $v$ represents an odd-cardinality set
$S(v)\subseteq V_{\tt orig}$, and the sets represented by the
top-level nodes partition $V_{\tt orig}$. If $S(v)$ contains more than
one original node, then $v$ is called a \emph{supernode}.

A supernode stores its internal graph, an internal matching with
exactly one node unmatched, and the corresponding cherry-forest
structure. The unmatched internal node is called the
\emph{receptacle} of the supernode. For an internal node $v$, we treat paths $\calP^+_v$ and $\calP^-_v$ as ending at the receptacle of the corresponding supernode.
% We will say that a node can be either an original node of the input graph or a supernode. 
% Formally, we can define nodes recursively as follows.
% \begin{definition}
% A node $\calB$ is either a node of the original graph, or a set of nodes of odd cardinality
% together with a graph $G_\calB=(\calB,E_\calB)$, a matching $M$ on $G_\calB$ and the associated cherry forest data structure
% satisfying the following: \\
% (i) Graph $G_\calB$ is obtained by taking the subgraph of $G_{\tt orig}$ induced by nodes in $V_{\tt orig}$ corresponding to $\calB$, 
% and then contracting subsets corresponding to nodes $v\in \calB$ to individual nodes. \\
% (ii) $M$ has cardinality $(|\calB|-1)/2$, and hence $\calB$ has a single unmatched node, denoted as $\r_\calB$. It is called the {\em root} or the {\em receptacle} of $\calB$. \\
% (iii) All nodes in $\calB\setminus\{\r_\calB\}$ are $\pm$-nodes.
% \end{definition}

% When we specifically discuss non-elementary nodes, we will call them supernodes. The last invariant guarantees that, inside a given supernode, we can find an alternating path from any node to the receptacle. Thus, nodes inside a given supernode form a cherry blossom.

An important property of a cherry blossom $\calB$ is that its receptacle can be ``moved'' to any other node $v$ of $\calB$.
This operation is called {\em cherry blossom rotation} in~\cite{cherry_paper}. We state the corresponding lemma below and discuss it further in \cref{sec_rotation_proof}.\footnote{We believe that the original proof in \cite{cherry_paper} may have an incomplete termination argument.
In the proof, the authors first show how to do the rotation when paths $\calP^+_v$ and $\calP^-_v$ have no common arcs. 
When a common arc $(x,y)$ exists, they take the first such arc, move the root to $x$ recursively,
and then argue that the length of the path $\calP^+_v$ (which now ends at $x$) strictly decreases, since it is a strict subpath of the original path $\calP^+_v$.
However, recursively applying the rotation operation may have affected plus- and minus-parents on $\calP^+_v[v\rightarrow x]$, and the claim that the new $\calP^+_v$ has become shorter might not hold.}
\begin{lemma}[Receptacle rotation]\label{lemma:rotation}
There exists an algorithm that takes a cherry blossom $\calB$ and a node $v\in\calB\setminus\{\r_\calB\}$
and makes $v$ the root of this cherry blossom (by modifying the matching, labels, and parent pointers). 
%This algorithm runs in $O(|\calB|)$ time.
\end{lemma}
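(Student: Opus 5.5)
The plan is to avoid local surgery on parent pointers, which is exactly where the termination argument of \cite{cherry_paper} becomes delicate. Instead I will split the rotation into two steps: first change the matching, then rebuild the labels and parent pointers of $\calB$ from scratch by a fresh cherry-tree search confined to $\calB$. The matching step is immediate. Let $x=\r_\calB$, which is unmatched inside $\calB$. By \InvB, the path $\calP^+_v[v\rightarrow x]$ is an alternating path that starts at $v$ with a matched edge and ends at $x$, and it is even. Flipping $M$ along it leaves $v$ as the unique unmatched node of $\calB$ and keeps $|M\cap E(\calB)|=(|\calB|-1)/2$. Matched edges are only replaced by edges of the path, so every matched edge stays tight.

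The second step is a structural claim: the graph $H$ induced on $\calB$ by the edges that appear as parent edges (all of them tight) is \emph{factor-critical}. Every node of $\calB$ is either $x$ or a $\pm$-node $u$ with receptacle $x$. For such $u$, flipping $M$ along $\calP^+_u[u\rightarrow x]$ gives a near-perfect matching of $H$ that misses exactly $u$. For $u=x$, the original $M$ misses $x$. The graph $H$ is connected because every node has a parent path to $x$ inside $\calB$. Hence $H$ has a near-perfect matching missing any prescribed node, so it is factor-critical.

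Next I reset all labels in $\calB$, make $v$ a root, and run the primal operations \texttt{Grow-out}/\texttt{Grow-in} and the cherry-blossom formation step of the cherry-tree algorithm, restricted to $E(H)$, until none applies. No \texttt{Augment} can occur because $v$ is the only unmatched node. This search is the cherry-tree algorithm of \cite{cherry_paper} run on the graph $H$, which has a fixed finite node set. Its termination and the preservation of \InvA, \InvB\ are therefore inherited from that algorithm (equivalently, from the primal phase of Blossom VI, which I take as given). In particular, termination now comes from a global potential of the search, namely the number of label assignments, which only grow. It no longer depends on a length argument about $\calP^+_v$.

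Finally I identify the end state using the Gallai--Edmonds decomposition. At termination the matching is maximum in $H$. The nodes that end up unlabeled form $C(H)$, the purely minus-nodes form $A(H)$, and the plus-nodes form $D(H)$. For a factor-critical graph, $D(H)=V(H)$ and $A(H)=C(H)=\emptyset$. So every node is a plus-node, and every matched node $u\neq v$ must also be a minus-node, since it got its plus label through its matched partner, which must carry a minus label. Therefore every node other than $v$ is a $\pm$-node, and $v$ is the only purely plus-node in $\calB$. By \InvB, each receptacle is a purely plus-node on the node's own parent paths, so $\r_u=v$ for all $u$. Hence $\calB$ is a cherry blossom with receptacle $v$.

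The main obstacle I expect is this last correspondence. I must check that the label sets of a \emph{terminated cherry-tree search} match the Gallai--Edmonds sets, i.e.\ that plus-labels are exactly the nodes reachable by even alternating paths. Cherry blossoms do not contract odd cycles, so this is not literally Edmonds' theorem. The first thing I would try is to argue directly: if some $u\in\calB$ were not a $\pm$-node at termination, use the near-perfect matching missing $u$ and the symmetric difference with $M$ to produce an even alternating path from $v$ to $u$. Its first edge leaving the plus-labelled region would then have to be a tight edge on which \texttt{Grow} or blossom formation is still applicable, which is a contradiction.
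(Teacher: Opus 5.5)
\paragraph{Where the proposal has a gap.} Your last step carries the entire content of the lemma, and it is not proved. You need that the re-run search on $H$ ends with every node of $\calB\setminus\{v\}$ a $\pm$-node. The correspondence ``purely minus $=A(H)$, plus $=D(H)$'' is asserted, and it cannot be inherited from the cherry-tree algorithm. Theorem~3.1 of \cite{cherry_paper} only says that the final matching is maximum, and that is vacuous here, since any near-perfect matching of the odd set $\calB$ is maximum.

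The inclusion you actually need, $D(H)\subseteq V^+$, is the hard direction, and your fallback sketch does not give it. An even alternating path from $v$ can enter a cherry blossom and reach its purely plus receptacle by an unmatched plus--plus edge inside that blossom. It can then leave along the receptacle's matched edge to a purely minus node, and continue from there into unlabeled nodes. None of these edges makes \texttt{Grow-out} or \texttt{Grow-in} applicable, so ``the first edge leaving the plus region'' yields no contradiction. This is exactly the blossom phenomenon that makes Edmonds-type correctness arguments nontrivial.

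\paragraph{How to close it.} A barrier count closes the gap, using the factor-criticality you already proved. At termination, let $A$ be the set of purely minus nodes.
\begin{itemize}
\item Labels are created in matched pairs and there is a single tree. Because \texttt{Grow-out} and \texttt{Grow-in} are inapplicable, no edge of $H$ joins a plus-node to an unlabeled node, and every plus--plus edge of $H$ lies inside one cherry blossom.
\item Each cherry blossom $\calB_x$ is connected through its own parent edges. It has odd size, because by \InvA\ and \InvB\ the mate of every node of $\calB_x\setminus\{x\}$ lies in $\calB_x\setminus\{x\}$.
\item The matching pairs $A$ bijectively with the purely plus nodes other than $v$.
\end{itemize}
Hence $H-A$ has at least $|A|+1$ odd components. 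Factor-criticality gives $\mathrm{odd}(H-S)\le |S|-1$ for every nonempty $S$: take the perfect matching of $H-s$ for some $s\in S$. Therefore $A=\emptyset$, so $v$ is the only purely plus node. Connectivity of $H$ then excludes unlabeled nodes.

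With this added, your route is a valid alternative to the paper's. The paper performs local surgery: it repeatedly augments along $\{(\r_\calB,x)\}\cup\calP^+_x[x\rightarrow u]$, with the potential $d(\r_\calB)$ strictly decreasing. Your route flips once and rebuilds, getting termination from the monotone label count instead. As submitted, though, the proposal does not establish the lemma.
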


\myparagraph{Slacks and dual variables}
Throughout the algorithm, every matched edge is tight, and for every odd set $S \subset V$ with positive dual variable, its internal matching contains $(|S|-1)/2$ edges. 
Every matched edge and every parent edge, including parent edges stored
inside supernodes, has zero slack. Once $M$ becomes perfect, we have $x(\delta(S)) = 1$ for every $S$ with $y_S > 0$, so the primal and dual solutions satisfy complementary slackness.

For an original node $v$, the dual variable $y_v$ is unrestricted.
For a supernode $v$ representing an odd set $S$, we write $y_v$ for
the blossom variable $y_{S}$ and require $y_v \geq 0$. Primal
updates modify the matching and cherry forest while leaving these
variables unchanged; dual updates modify the dual variables of
top-level nodes. We describe the primal update operations below.

% We denote the current slack of an edge $e$ by ${\tt slack}(e)$; it is given by \cref{eq_slack_def}.
% The following invariant is maintained throughout the algorithm:
% all plus- and minus-parents in the cherry forest and in all supernodes, as well as all matched edges,
% have zero slack. All other edges have a nonnegative slack.

% Each supernode $v$ stores the dual variable $y_v$.
% If $v$ corresponds to a singleton node in $V_{\tt orig}$ then $y_v$ can have any sign.
% Otherwise, if $v$ corresponds to an odd-cardinality subset of $V_{\tt orig}$ with at least three nodes, we must have $y_v\ge 0$.

% The algorithm alternates between (i) primal updates, which modify the current cherry forest, blossoms and matchings while keeping all dual variables intact, and
% (ii) dual updates, which modify variables $y_v$ in the current cherry forest. Below we describe operations that are allowed in these updates.

\subsection{Primal updates}
The primal search uses only tight edges; all edges of positive slack
are ignored. Four operations modify the search structure:
\texttt{Grow-out}, \texttt{Grow-in}, \texttt{Augment}, and
\texttt{Expand}. Exhausting these operations yields a maximum-cardinality matching on the zero-slack subgraph (see \cite{cherry_paper}, Theorem 3.1). Once none of them is applicable, the algorithm uses a
fifth operation, \texttt{Shrink}, to contract the nontrivial cherry
blossoms before the dual update.

\begin{figure*}[htbp]
    % \hspace{-100pt}
    \begin{tabular}{cc}
        \shortstack{\fbox{\includegraphics[scale=0.45]{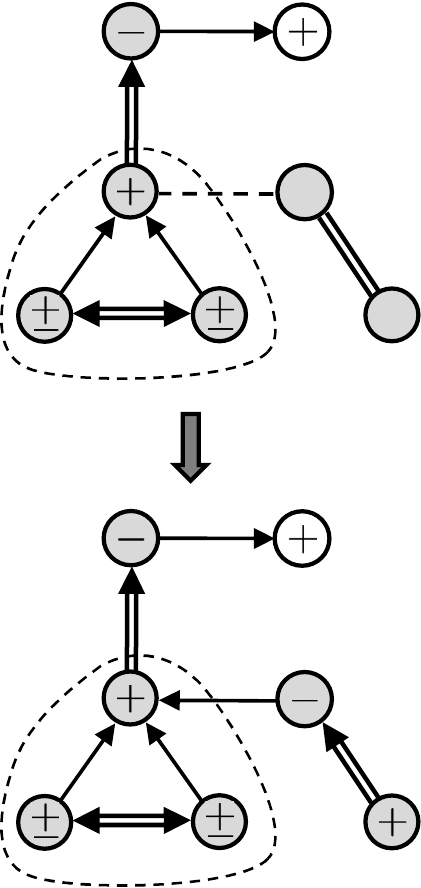} }
        \begin{picture}(0, 0)
        	\put(-68,153){$v$} 
		\put(-29,154){$x$} 
		\put(-12,131){$y$}
	\end{picture}
        \\\tt{(a) Grow-out}} &
        \shortstack{\fbox{\includegraphics[scale=0.45]{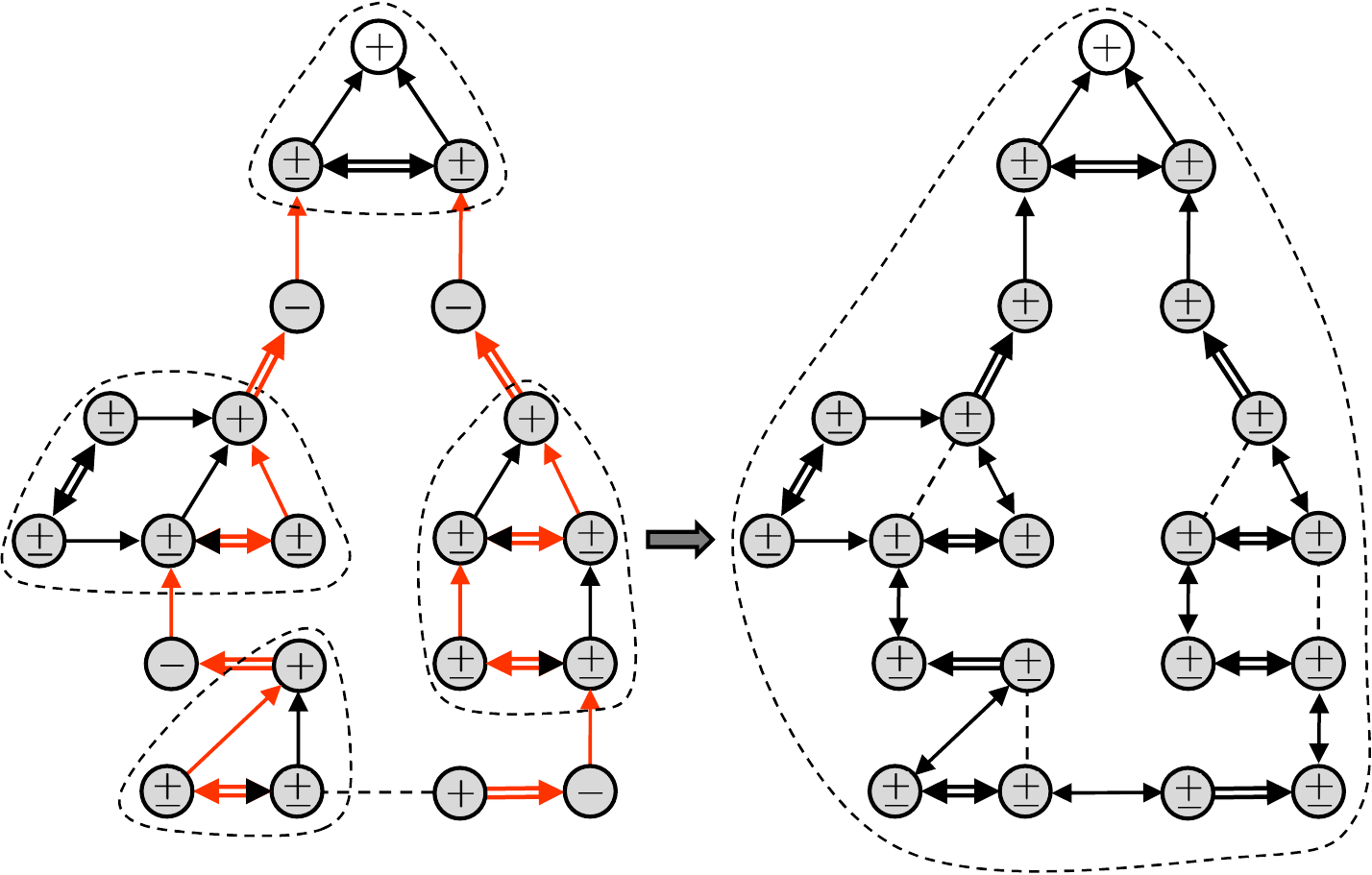}}\\\tt{(b) Grow-in}}
        \begin{picture}(0, 0)
        	\put(-247,39){$u$} 
		\put(-215,40){$v$} 
		\put(-222,207){$a=r$} 
		\put(-272,172){$x_u$} 
		\put(-267,140){$y_u$} 
		\put(-201,172){$x_v$} 
		\put(-205,140){$y_v$} 
        \put(-255,200){$\calB$} 
	\end{picture}
         \vspace{10pt} \\
        {} &
        \hspace{0pt}\shortstack{\fbox{\includegraphics[scale=0.45]{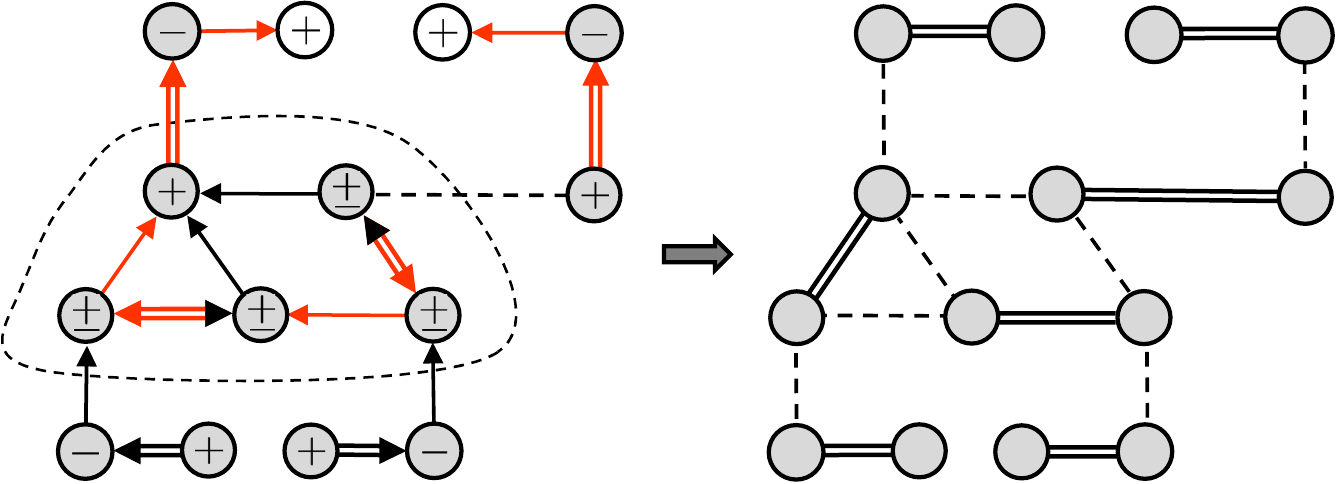}}
        \begin{picture}(0, 0)
        	\put(-215,65){$u$} 
		\put(-179,65){$v$} 
	\end{picture}
        \\\texttt{(c) Augment}} \vspace{-30pt} \\
        \shortstack{\fbox{\includegraphics[scale=0.45]{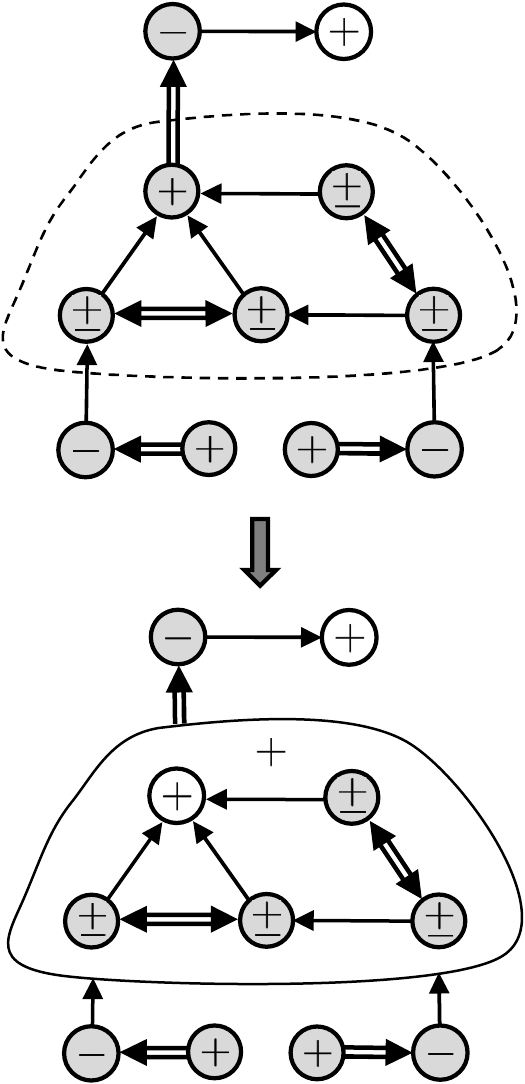}}\\\tt{(d) Shrink}}
        \begin{picture}(0, 0)
        	\put(-93,210){$r$} 
	\end{picture}
         &
        \shortstack{\fbox{\includegraphics[scale=0.45]{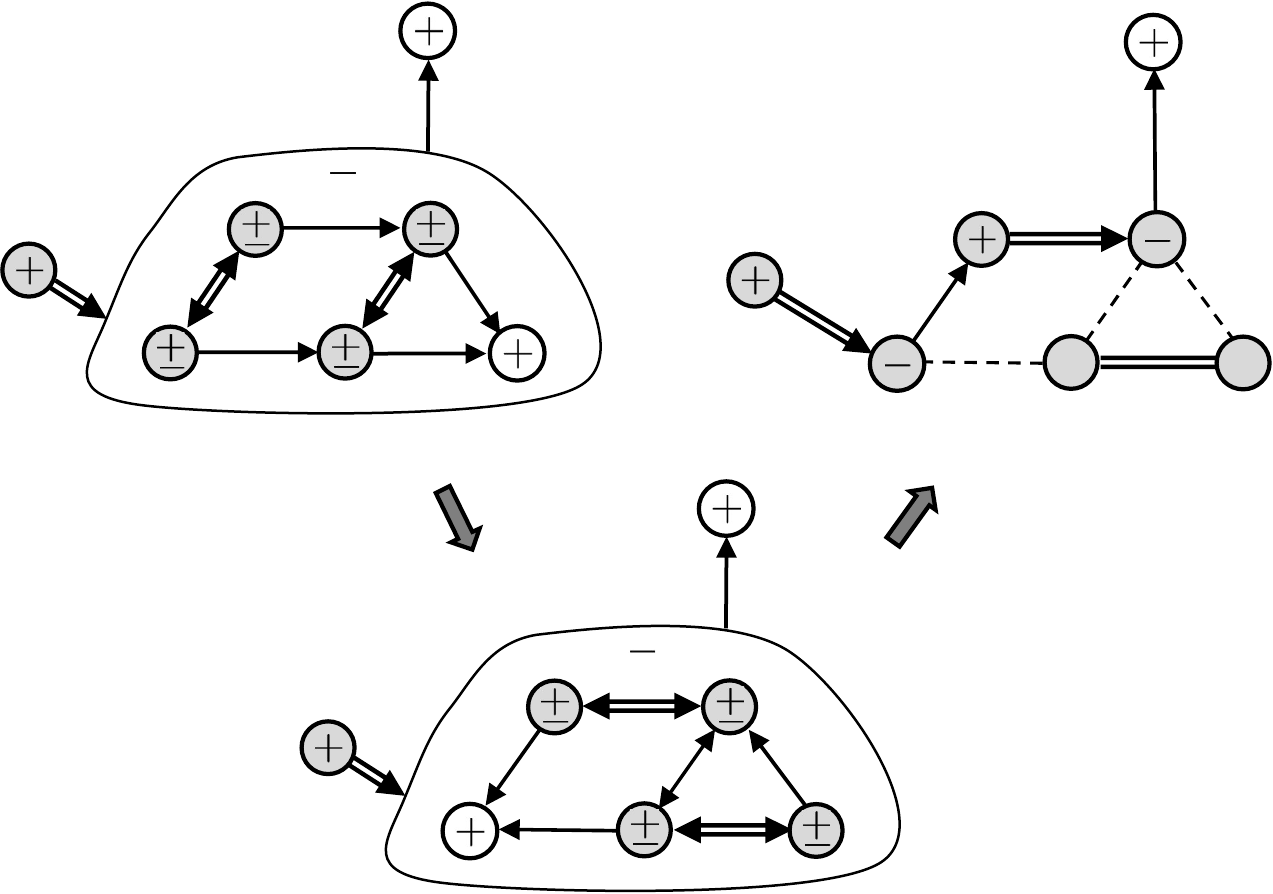}}\\\tt{(e) Expand}} 
        \begin{picture}(0, 0)
        	\put(-223,176){$v$} 
        	\put(-280,156){$u$} 
        	\put(-205,198){$w$} 
        	\put(-251,139){$x$} 
        	\put(-191,167){$y$} 
	\end{picture}
 
    \end{tabular}
    \caption{Primal update operations. White nodes denote roots of cherry trees and receptacles of contracted blossoms, solid nodes are all other nodes.
    Double-line edges are matched, single-line edges are not matched. 
    Edges with arrows are parent edges (which can be bidirectional; e.g., a bidirectional double-line edge $\{x,y\}$ means
    that $(x,y)$ is the plus-parent edge of $x$ and $(y,x)$ is the plus-parent edge of $y$).
    Dashed edges are edges with zero slack that are not parent edges.
    Directed red edges in {\tt Grow-in} and {\tt Augment} denote paths from nodes $u$ and $v$ towards the root
    computed in the respective operation.
    Dashed lines enclose cherry blossoms, solid lines enclose contents of supernodes.
    }    \label{fig:primal}
\end{figure*}

\myparagraph{\underline{\tt Grow-out}} Adds a pair of previously unlabeled nodes to a cherry tree. If $v$ is a plus-node, $\{v, x\}$ is a tight edge, $\{x, y\}$ is a matched edge, and $x,y$ are unlabeled nodes,
then $v$ can ``acquire'' $x$ and $y$ as descendants: $x$ becomes a minus-node with minus-parent $(x,v)$ and $y$ becomes a plus-node with plus-parent $(y,x)$
(see \cref{fig:primal}(a)).

\myparagraph{\underline{\tt Grow-in}} Makes a larger cherry blossom within a given tree. If $u$ and $v$ are plus-nodes in the same cherry tree but in different cherry blossoms and $\{u, v\}$ is a tight edge, then the cherry tree can be grown as follows (see \cref{fig:primal}(b)). \\
(i) Find the first common arc $(a,b)$ of paths $\calP^+_u$ and $\calP^+_v$;
if no such arc exists, take $a$ to be the common endpoint of $\calP^+_u,\calP^+_v$ (it is then the root of the cherry tree).
The edge $ab$ must be matched (if it exists) and $a$ must be a plus-node. \\
(ii) Let $\calB$ be the cherry blossom to which $a$ belongs (possibly a singleton), and let $r=\r_\calB$ be its receptacle.
For each $w\in\{u,v\}$ that satisfies $w\notin\calB$ do the following: find the first arc $(x,y)$ on $\calP^+_w$
such that $x\notin\calB$ and $y\in \calB$. The node $x$ must be a purely minus-node.
All nodes on the path $\calP^+_w[w\rightarrow x]$ become $\pm$-nodes. Set their plus- and minus-parents to point to edges of this path
in the appropriate directions, with two exceptions: keep the minus-parent of $x$ as $(x,y)$,
and set the minus-parent of $w$ to $(w,\bar w)$ where $\bar w\in\{u,v\}\setminus\{w\}$. \\
(iii) Update the receptacle of nodes in all cherry blossoms through which $\calP^+_w[w\rightarrow r]$ passes to be $r$.

\myparagraph{\underline{\tt Augment}} 
Suppose that $u$ and $v$ are plus-nodes in different cherry trees and
that $\{u, v\}$ is tight. Then $({\tt reverse}(\calP^+_u),(u,v),\calP^+_v)$ is an augmenting path between the two roots. Flipping the matching
status of its edges increases $|M|$ by one. All nodes in the two
destroyed cherry trees then become unlabeled (\cref{fig:primal}(c)).

\myparagraph{\underline{\tt Shrink}} Creates a supernode. Suppose that $\calB$ is a cherry blossom with receptacle $r=\r_\calB$. This operation contracts $\calB$ into a single node,
which becomes a purely plus-node with dual variable zero. The node $r$ becomes unmatched inside $\calB$  (\cref{fig:primal}(d)). In our algorithm, \texttt{Shrink} operations happen only when no other primal update is possible.

\myparagraph{\underline{\tt Expand}} Expands a supernode. Suppose that $v$ is a purely minus-node with $y_v=0$ corresponding to a contracted blossom $\calB$.
Let $\{u, v\}$ be the unique matched edge (then $(u,v)$ is the plus-parent of $u$), 
and let $(v,w)$ be the minus-parent of $v$.
Let $\{u, x\}$ and $\{y, w\}$ be the edges with $x,y\in\calB$ corresponding to $\{u, v\}$ and $\{v, w\}$, respectively.
First, apply the rotation operation from \cref{lemma:rotation} to move the root of $\calB$ to $x$.
Then make nodes on the path $\calP^+_y$ purely plus or purely minus nodes in an alternating fashion
(with $x,y$ being purely minus-nodes), set their parents to point to edges along the reverse of this path, and uncontract $\calB$. All other nodes of $\calB$ become
unlabeled. Make $(u,x)$ the plus-parent of $u$, and $(y,w)$ the minus-parent of $y$ 
(\cref{fig:primal}(e)).

\subsection{Dual updates}

%The dual update phase can be applied when there are no top $\pm$-nodes (and hence each cherry tree is a normal alternating tree). In this phase we 
%find a nonnegative number $\delta_T$ for each cherry tree $T$, and then update dual variables for nodes $v$ in $T$: if $v$ is a plus-node then increase $y_v$ by $\delta_T$, and if $v$ is a minus-node then decrease $y_v$ by $\delta_T$.
%This will increase the dual objective by $\sum\limits_T \delta_T$, since in every alternating tree the number of  plus-nodes exceeds the number of minus-nodes by 1. We require the dual variables to remain feasible; this gives constraints of the form 
%$\delta_T\le c_T$ for some trees $T$, and
%$\delta_{T}\pm\delta_{T'}\le c_{TT'}$ for some pairs of trees $T,T'$. A description of these constraints can be found in~\cite{blossom_v}.

After all nontrivial cherry blossoms have been contracted, there are no top-level $\pm$-nodes. Consequently, every cherry tree is an ordinary
alternating tree. We can therefore apply the standard dual
update used in Blossom IV and Blossom V.
We find a nonnegative number $\delta_T$ for each cherry tree $T$, and then update dual variables for nodes $v$ in $T$: if $v$ is a plus-node then increase $y_v$ by $\delta_T$, and if $v$ is a minus-node then decrease $y_v$ by $\delta_T$.
This increases the dual objective by $\sum\limits_T \delta_T$, since in every alternating tree the number of  plus-nodes exceeds the number of minus-nodes by 1. 
We require the dual variables to remain feasible; 
this gives the following constraints on~$\{\delta_T\}$:
\begin{align}\label{eq:deltaT:constraints}
\delta_T+\delta_{T'} & \le \slack(u,v)   && (u,v) \mbox{ is a } (+,+) \mbox{ edge, } \\
                     &                   && \mbox{~~~~~~~~} u \in T, v\in T', T\ne T' \nonumber \\
\delta_T-\delta_{T'} & \le \slack(u,v)   && (u,v) \mbox{ is a } (+,-) \mbox{ edge, } \nonumber \\
                     &                   && \mbox{~~~~~~~~} u \in T, v\in T', T\ne T' \nonumber \\
\delta_T              \le &\;\slack(u,v)   && (u,v) \mbox{ is a } (+,\varnothing) \mbox{ edge, } u \in T \nonumber  \\
\delta_T              \le &\;\slack(u,v)/2 && (u,v) \mbox{ is a } (+,+) \mbox{ edge, } u,v \in T \nonumber  \\
\delta_T              \le &\;y_v           && v\mbox{ is a minus-supernode,  } v \in T  \nonumber 
\end{align}
where we shortened the words ``plus'', ``minus'' and ``unlabeled'' to symbols ``$+$'',
``$-$'' and ``$\varnothing$'', respectively.

\section{Comparison with prior implementations}\label{sec_prior_implementations}
%In this section we give a brief overview of the Blossom V algorithm, which is our main competitor.
%It follows the standard Edmonds framework for minimum cost perfect matchings. 
Existing implementations follow 
the standard Edmonds framework for minimum weight perfect matchings.
In this framework
there are no $\pm$-nodes at the top level;
more precisely, when an odd cycle is discovered by an operation analogous to \texttt{Grow-in}, it is immediately contracted into a supernode called a blossom,
%when such nodes are created by the {\tt Grow-in} operation, the obtained cherry blossom (which is an odd cycle) is immediately contracted to a supernode (called a ``blossom''),
which becomes a purely plus-node.
Thus, the supernodes have a much simpler structure:
they are just odd cycles.

Below we give a short overview of three representative implementations of this framework.

\myparagraph{Blossom IV (Cook-Rohe~\cite{blossom_4})}
This work introduced the ``variable $\delta$'' approach that allows different dual updates $\delta_T$ for different trees. 
More specifically, Cook-Rohe compute connected components in the undirected graph over trees
where two trees $T,T'$ are connected by an edge
if there is a constraint $\delta_T-\delta_{T'}\le 0$ between them.
They then go through these components $\calC$ in some order and greedily increase the values $\delta_T$ for all $T\in \calC$  by the same amount subject to the constraints~\eqref{eq:deltaT:constraints}.
These constraints %in~\eqref{eq:deltaT:constraints} 
are computed by explicitly traversing all edges.

\myparagraph{Mehlhorn-Sch{\"a}fer~(\cite{mehlhorn_schafer})}
This work implemented the algorithm
of Galil-Micali-Gabow~\cite{mwpm_theory_gabow3} whose complexity is $O(mn\log n)$ (or $O(m\log n)$ per augmentation). 
It avoids an explicit traversal of edges in dual updates by storing edges in a {\em concatenable priority queue}.
%This speeds up the computation of the constraints in~\eqref{eq:deltaT:constraints}
%since the edges no longer need to be traversed explicitly. 
The algorithm uses a ``fixed $\delta$'' approach, i.e., sets $\delta_T$ to the same value for all trees. 

\myparagraph{Blossom V (Kolmogorov~\cite{blossom_v})}
This work combined the variable $\delta$ approach with the use of priority queues.
The algorithm maintains an auxiliary graph $\calG$ whose vertices are the trees, where two trees $T,T'$ are connected if there is an edge $\{v,v'\}$ with $v\in T$, $v'\in T'$. Each edge $\{T,T'\}$ of $\calG$ stores priority queues $pq^{++}(T,T')$, $pq^{+-}(T,T')$, $pq^{-+}(T,T')$
containing  edges between $T$ and $T'$ at the top level in $G$ of the appropriate type. Similarly, nodes $T$ of $\calG$ store
priority queues $pq^{++}(T)$, $pq^{+\varnothing}(T)$ of edges incident to $T$, and a priority queue $pq^{-}(T)$ of blossoms in $T$.

Our implementation uses the variable $\delta$ approach and maintains an auxiliary graph over trees with priority queues, as in Blossom V.
Also, we use the same strategy for computing dual updates (based on connected components) as in Blossom IV and Blossom V. As stated before, the main novelty of our implementation is in how the blossoms are handled in the primal phase.

\section{Experimental results}

We have measured the performance of our C++ implementation against the performance of Blossom V on 9 families of instances. We used a laptop running Ubuntu 24.04 with 16 GB of RAM and an Intel Core i7-8565U CPU (1.80 GHz, 8 MB cache). The code was compiled using GCC 13.3.0 with the -O3 optimization flag. Both solvers are single-threaded. We used Blossom V version 2.05 available at \url{https://pub.ista.ac.at/~vnk/software.html}. The results are summarized in \cref{fig_runtimes}, \cref{table_tsp}, and \cref{fig_runtime_ratios}. Every instance has been solved 3 times by both algorithms, and we report the average runtime. We never observed a deviation higher than $5\%$ in the runtimes for any given instance. We used the following instance families.

\begin{enumerate}
    \item \textbf{Dense random} instances. Erd\H{o}s--R\'enyi random graphs with average node degree $n/10$. To ensure the existence of a perfect matching, we sample a random perfect matching first, and then sample the rest of the edges uniformly. The edge weights are sampled randomly from $[-10^6, 10^6]$.

    \item \textbf{Sparse random} instances. Erd\H{o}s--R\'enyi random graphs with average node degree $10$. Similarly to the dense random instances, we sample a random perfect matching first, and then sample the rest of the edges uniformly. The edge weights are sampled randomly from $[-10^6, 10^6]$.

    \item \textbf{Delaunay-big-weights} instances. We sample $n$ points from a square of size $10^6 \times 10^6$ and compute the Delaunay triangulation of these points. The points and the triangulation edges define the instance. Edge weights are lengths of the edges rounded to the nearest integer. It was proved in \cite{delaunay_contains_matching} that Delaunay triangulations of even-sized general-position point sets contain perfect matchings.
    
    \item \textbf{Delaunay-small-weights} instances. Same as Delaunay-big-weights, but the $n$ points are sampled from a $\sqrt{n} \times \sqrt{n}$ square. This results in most of the edge weights being small integers.

    \item \textbf{Geometric-big-weights} instances. We sample $n$ points from an $a \times a$ square, $a = 10^6$, and draw edges between points with pairwise distances at most $a \sqrt{\frac{10}{\pi n}}$. This choice of the threshold results in an average degree of approximately 10. To ensure that a perfect matching exists, we take the union of the graph and a random perfect matching. The edge weights are edge lengths rounded to integers.
    
    \item \textbf{Geometric-small-weights} instances. Same as Geometric-big-weights, but the $n$ points are sampled from a $\sqrt{n} \times \sqrt{n}$ square. The average degree is still around 10, but the edge weights are mostly small integers.

    \item \textbf{Maxcut-big-weights} instances. We sample $n$ points from a sphere and compute their convex hull. This defines a planar triangulation (with the outer face also triangular). We set the edge weights of the triangulation to be random integers from $0$ to $10^6$. One can compute the max-cut of planar graphs by solving the minimum weight perfect matching problem on some auxiliary graph \cite{hadlock_planar_maxcut}. In the case of triangulations, we can build the auxiliary graph as follows. Take the dual planar graph with weights inherited from the weights of the triangulation edges. Replace each node with a triplet of nodes (so that the new graph is still 3-regular). Within each triplet, the nodes are connected with zero-weight edges. A minimum weight perfect matching in this graph corresponds to a max-cut in the triangulation. Solving max-cut is also sometimes described in terms of finding a ground state of a specific Ising model.

    \item \textbf{Maxcut-small-weights} instances. Same as Maxcut-big-weights, but the weights of the triangulation edges are sampled randomly from $\{0, 1\}$.

    \item \textbf{TSP} instances. Following the Blossom V paper, we used the set of TSP instances by Andre Rohe obtained from \url{https://www.math.uwaterloo.ca/tsp/vlsi/index.html}. We only considered the instances with more than 100,000 nodes. If an instance contained an odd number of points, we removed the last point. For each instance, we computed the Delaunay triangulation of the point cloud and set the weights to edge lengths rounded to integers.
\end{enumerate}

\begin{figure*}[!h]
    \centering
    \includegraphics[width=1.01\linewidth]{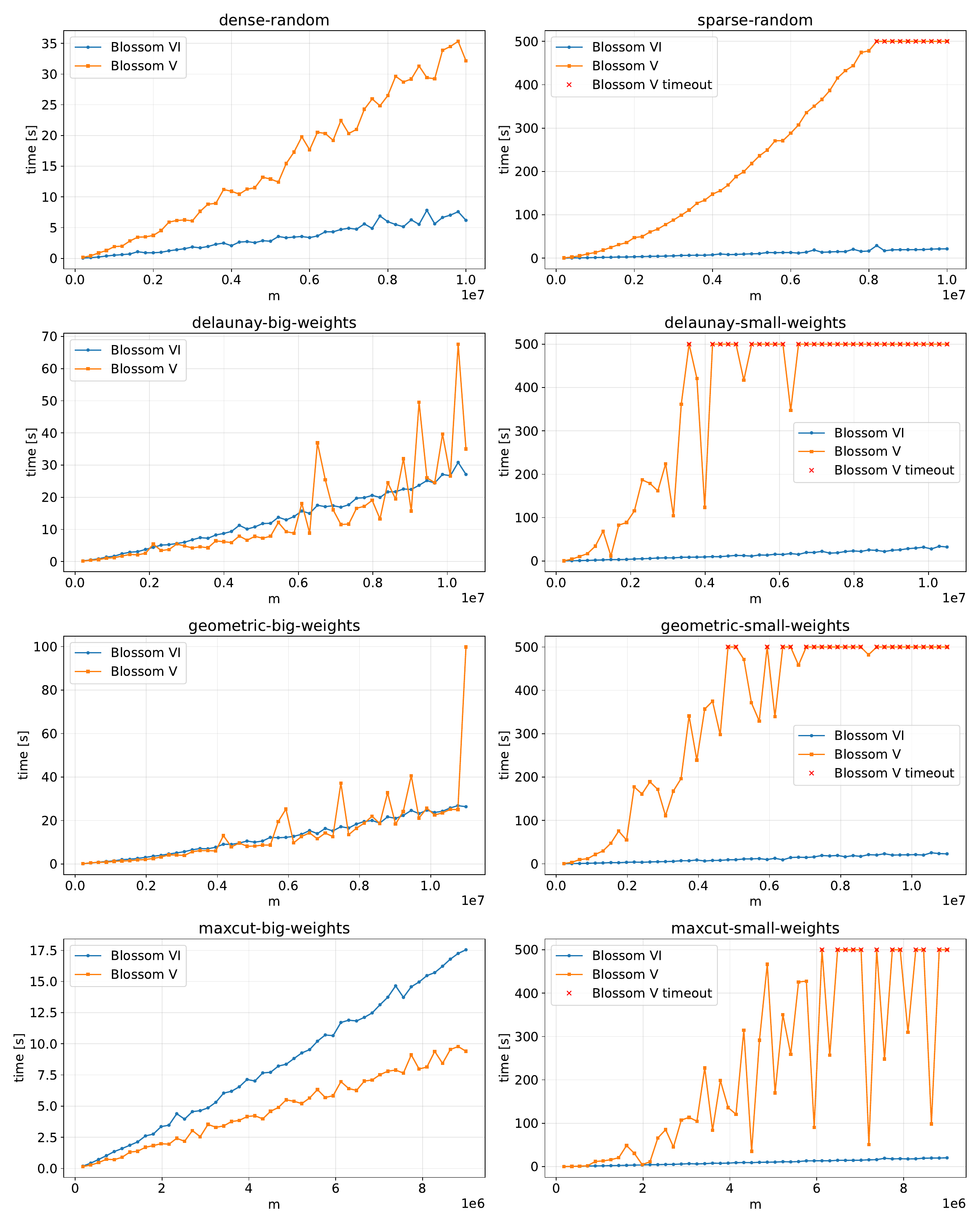}
    \caption{Runtimes of Blossom VI and Blossom V. Red crosses denote 500-second timeouts.}
    \label{fig_runtimes}
\end{figure*}

In our experiments, we set a 500-second timeout. Our code outperforms Blossom V on dense random instances, sparse random instances, Delaunay-small-weights, geometric-small-weights, and maxcut-small-weights instances. It performs approximately the same on Delaunay-big-weights and geometric-big-weights instances, and it is about two times slower on maxcut-big-weights instances. 

One may observe that on dense random, Delaunay-big-weights, geometric-big-weights, and maxcut-big-weights families, both algorithms scale roughly linearly over the tested range, taking tens of seconds to terminate for instances with $\sim 10^7$ edges. On sparse random, Delaunay-small-weights, geometric-small-weights, and maxcut-small-weights families, Blossom V hits timeouts on larger instances, while Blossom VI still takes at most tens of seconds. 

\begin{table}[h]
\caption{Runtime comparison on TSP instances. The runtimes are given in seconds.}
\label{table_tsp}
\centering
\small
\begin{tabular}{|c|c|c||c|c|}
\hline
name & $n$ & $m$ & B5 time& B6 time\\
\hline
% dan59296 & 59296 & 177299 & 0.07 & 0.09 \\
sra104815 & 104814 & 314222 & 0.20 & 0.23\\
ara238025 & 238024 & 713594 & 8.19 & 0.83  \\
lra498378 & 498378 & 1494967 & 8.57 & 1.99 \\
lrb744710 & 744710 & 2233725 & 120.27 & 2.45 \\
\hline
\end{tabular}
\normalsize
\end{table}

To explain runtime differences, we collected additional statistics for some runs in
\cref{tab_stats}. We include at least one instance from each family. For families in which Blossom V showed high runtime dispersion, we include both a faster and a slower
Blossom V run. All selected runs were completed within the time limit.

For each run, we report the runtime, the number of supernodes created,
and the maximum supernode nesting depth observed during the run. For
Blossom V, we additionally report the fractions of its runtime spent
on initialization and on \texttt{Expand} and \texttt{Shrink}
operations.\footnote{A similar operation breakdown is not meaningful for
Blossom VI. In Blossom V, a primal operation immediately performs the
associated priority-queue and slack updates. Blossom VI performs these
updates at the end of the primal phase, and they cannot be
linked unambiguously to an individual primal operation.}

\begin{table*}[!h]
    \caption{Statistics comparison between Blossom V and Blossom VI. For every instance, $n$ and $m$ are given in millions; ``-bw'' and ``-sw'' denote ``-big-weights'' and ``-small-weights'', respectively. B5 stands for Blossom V, B6 stands for Blossom VI.
}
    \label{tab_stats}
    \centering
    % \small
    \setlength{\tabcolsep}{1.pt}
    \renewcommand{\arraystretch}{1.}
    
    \begin{tabular}{!{\vrule width 1.pt}c|c|c!{\vrule width 1.pt}c|c!{\vrule width 1.pt}c|c!{\vrule width 1.pt}p{2cm}|c!{\vrule width 1.pt}c|c|c!{\vrule width 1.pt}}
    \Xhline{1.pt}
    \multicolumn{3}{!{\vrule width 1.pt}c!{\vrule width 1.pt}}{instance}
    & \multicolumn{2}{c!{\vrule width 1.pt}}{runtime [s]} 
    & \multicolumn{2}{c!{\vrule width 1.pt}}{\# supernodes} 
    & \multicolumn{2}{c!{\vrule width 1.pt}}{max supernode depth}
    & \multicolumn{3}{c!{\vrule width 1.pt}}{B5 operation times} \\
    \hline
    family & $n$ [$10^6$] & $m$ [$10^6$] & B5 & B6 & B5 & B6 & \centering B5 & B6 & init & expand & shrink \\
    \Xhline{1.pt}
    random-dense & 0.0141 & 10.0 & 34.3 & 6.31 & 0 & 2 & \centering 0 & 1 & 100\% & 0\% & 0\% \\
    random-sparse & 1.00 & 5.00 & 231 & 9.82 & 7 & 574 & \centering 4 & 566  & 100\% & 0\% & 0\% \\
    delaunay-bw & 3.36 & 10.1 & 30.9 & 27.0 & 719,024 & 691,603 & \centering 3,454 & 444 & 10\% & 34\% & 5\% \\
    delaunay-bw & 3.43 & 10.3 & 93.5 & 31.9 & 783,749 & 727,562 & \centering 12,867 & 607 & 3\% & 53\% & 3\% \\
    delaunay-sw & 0.980 & 2.94 & 229 & 7.48 & 228,699 & 66,105 & \centering 1,270 & 13 & 0\% & 96\% & 0\% \\
    geometric-bw & 1.96 & 10.8 & 29.7 & 26.1 & 521,915 & 499,881 & \centering 3,686 & 444  & 10\% & 16\% & 6\% \\
    geometric-bw & 2.00 & 11.0 & 111 & 26.4 & 531,093 & 506,195 & \centering 5,111 & 452 & 3\% & 76\% & 2\% \\
    geometric-sw & 0.760 & 4.17 & 369 & 6.6 & 213,945 & 48,873 & \centering 1,487 & 7 & 0\% & 95\% & 0\% \\
    maxcut-bw & 6.00 & 9.00 & 10.2 & 18.0 & 2,147,749 & 2,161,210 & \centering 711 & 720 & 10\% & 6\% & 12\% \\
    maxcut-sw & 2.88 & 4.32 & 329 & 9.37 & 724,571 & 539,529 & \centering 2,444 & 21 & 0\% & 97\% & 0\% \\
    maxcut-sw & 3.00 & 4.50 & 42.2 & 9.13 & 727,164 & 555,294 & \centering 3417 & 25 & 1\% & 77\% & 1\% \\
    \Xhline{1.pt}
    \end{tabular}
    \normalsize
\end{table*}

\myparagraph{Expansion-dominated instances}
Except for two random graph instances, Blossom V is slower on those instances that are dominated by performing {\tt Expand} operations. Expanding a supernode is an expensive operation in both the Blossom V and Blossom VI implementations, since it involves the invalidation of the head and tail pointers for all the edges incident to the supernode, including loops. It could be especially expensive for deep supernodes with a lot of intermediate zero-variable supernodes, that get a cascade of expansions within a single primal phase. This is also a potential explanation for high variance in the Blossom V runtimes for some families. We observe that on those instances that are hard for Blossom V, the maximum node depth of Blossom VI is much smaller, generally by one to two orders of magnitude. We believe that this is the main reason why our code outperforms Blossom V on the hard instances: shallow blossoms prevent us from the heavy loss of time on {\tt Expand} operations.

\myparagraph{Initialization-dominated instances} For random instances, Blossom V spends almost all runtime
in the initialization. Its fractional initialization solves the random instances almost to optimality (in the sparse instance, leaving two unmatched nodes after init, and in the dense instance, solving the problem to optimality). Blossom VI uses a less thorough fractional initialization strategy, which runs faster; however, more work is done in the main solving procedure. Thus, our speedup on the random families is explained by the initialization strategy.

\myparagraph{Number of supernodes}
On the big-weight families, the two implementations create similar
numbers of supernodes. On the small-weight families, Blossom VI creates
fewer supernodes. One possible explanation is that
small integer weights produce more edges with the same slack and hence
a denser $E_0$ subgraph. Thus, we have larger cherry blossoms and fewer {\tt Shrink} operations.

\section{Conclusions}
\label{sec:conclusions}

We presented Blossom VI, a practical algorithm for
minimum weight perfect matching. Our main idea is to treat
each primal phase as an unweighted maximum-matching computation on the
tight-edge subgraph. Blossom VI uses
cherry trees and postpones contraction until the primal phase is
complete. It then contracts entire cherry blossoms rather than sequences of nested traditional blossoms.

Our experiments show that this approach is effective on
instances on which Blossom V shows superlinear runtime. On such
instances, Blossom VI creates substantially shallower supernodes and avoids costly cascades of expansions observed in
Blossom V. On families for which Blossom V already scales nearly
linearly, Blossom VI remains competitive, although it can be slower.

\begin{figure*}[!h]
    \centering
    \includegraphics[width=1.\linewidth]{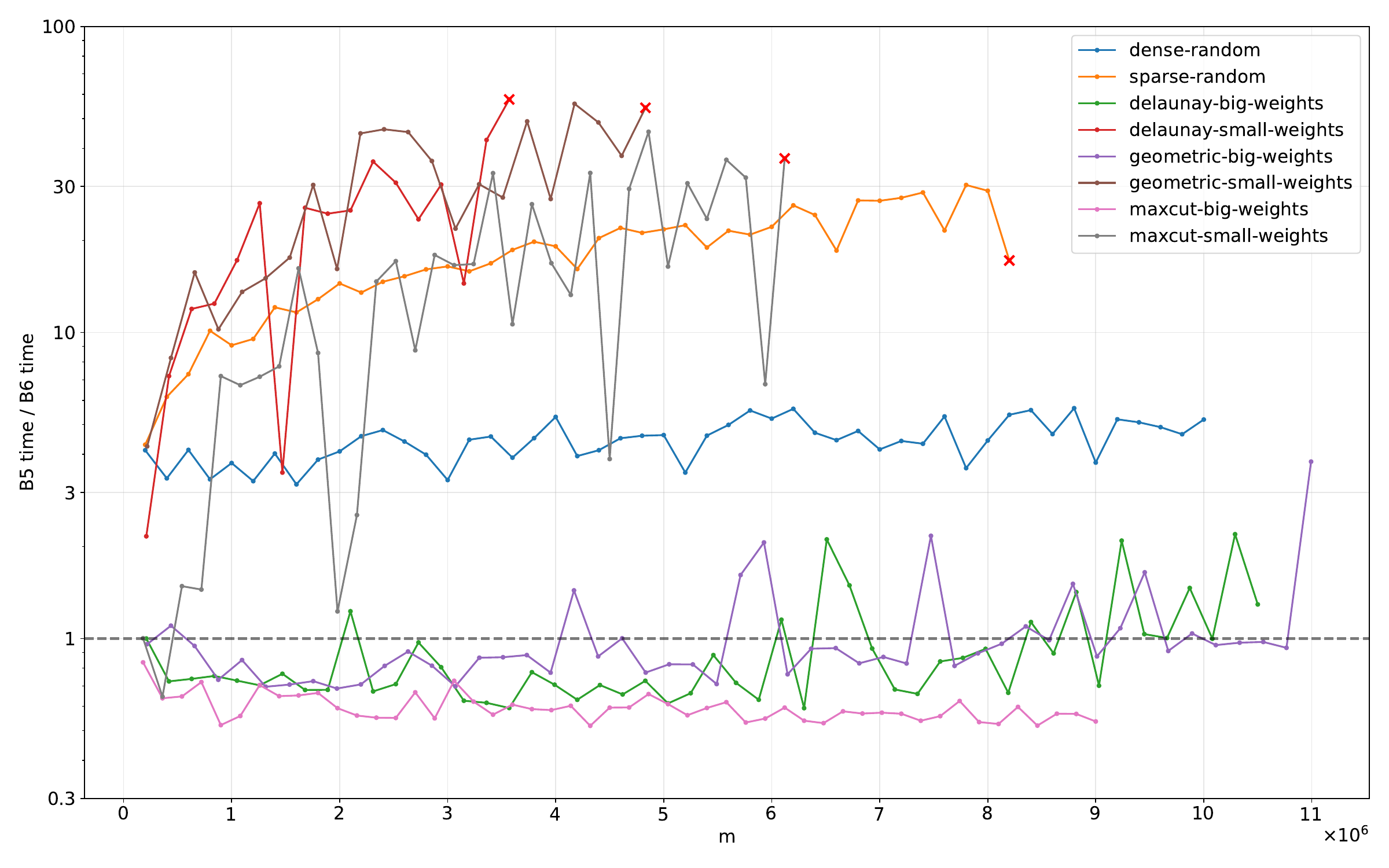}
    \caption{Ratio of Blossom V runtime to Blossom VI runtime (B5$/$B6). For families where Blossom V reached the timeout, the trend is cut at the first timeout, marked with a red cross.}
    \label{fig_runtime_ratios}
\end{figure*}

\newpage
\appendix

\section{Discussion of correctness}\label{sec_correctness_proof}
We use a known correctness result: the
usual multi-tree primal-dual blossom algorithm with feasible dual updates is correct~\cite{blossom_4,blossom_v}.
The new point that must be justified is that contracting an entire
cherry blossom at the end of a primal phase can be reduced to the
traditional blossom framework. First, we show invariants maintained by Blossom VI.

\begin{lemma}
\label{lemma:primal_dual_invariants}
Throughout the algorithm:
\begin{enumerate}[label=(\roman*)]
    \item the dual solution is feasible;
    \item $M$ is a matching and every edge of $M$ is tight;
    \item every parent edge in every top-level or internal cherry
          forest is tight;
    \item every odd set with positive dual variable is represented by
          a possibly nested supernode whose internal matching leaves
          exactly one node unmatched.
\end{enumerate}
\end{lemma}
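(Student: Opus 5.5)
We argue by induction over the sequence of operations performed by Algorithm~\ref{algo_main}: initialization, the primal operations \texttt{Grow-out}, \texttt{Grow-in}, \texttt{Augment}, \texttt{Expand}, \texttt{Shrink}, the receptacle rotation of \cref{lemma:rotation} used inside \texttt{Expand}, and the dual update. We show that each operation preserves (i)--(iv).

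\emph{Base case.} Initialization produces feasible $y$ with $y_S=0$ for all nontrivial odd $S$, and a matching of tight edges. No supernodes or parent edges exist yet, so (i)--(iv) hold trivially.

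\emph{Primal operations.} These leave every dual variable unchanged, so all slacks are unchanged and (i) is preserved.
\begin{itemize}[leftmargin=15pt]
\item \texttt{Grow-out} only adds the tight edge $\{v,x\}$ and the already matched edge $\{x,y\}$ as parent edges, so (iii) holds.
\item \texttt{Grow-in} only re-points parents to edges of existing parent paths, or to the tight edge $\{u,v\}$.
\item \texttt{Augment} flips matching status along a path consisting of parent edges and the tight edge $\{u,v\}$. The new matching is therefore tight, and it is a matching because the path is simple by \InvA. The labels it destroys carry no parent edges.
\item Rotation (\cref{lemma:rotation}) only rearranges the matching and parents within a cherry blossom. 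We must confirm that it only uses edges that were already tight parent or matched edges. This is a point to check against the proof in \cref{sec_rotation_proof}; alternatively, we argue that it stays within the zero-slack subgraph of $\calB$.
\item \texttt{Expand} additionally sets the parents $(u,x)$ and $(y,w)$. These are the original-graph edges corresponding to the tight edges $\{u,v\}$ and $\{v,w\}$. Because $y_v=0$, their slack in the uncontracted graph equals the slack of the contracted edges, which is zero. The remaining parents lie along $\calP^+_y$ and were tight internal parent edges.
\end{itemize}
For (iv): \texttt{Shrink} creates a supernode with $y=0$, so no positive-dual set is created. Its internal matching consists of the matched edges inside $\calB$, and by definition of a cherry blossom only the receptacle $r$ is left unmatched. \texttt{Expand} only removes a supernode with $y_v=0$. \texttt{Augment} through a supernode must preserve the property ``exactly one internal node unmatched''. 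We handle this by lazily rotating the receptacle to the entry node, using \cref{lemma:rotation}. This applies recursively to nested supernodes.

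\emph{Dual update.} This is the step where (i) genuinely needs checking. At that point no top-level $\pm$-nodes remain, every tree is an ordinary alternating tree, and the changes are $+\delta_T$ on plus-nodes and $-\delta_T$ on minus-nodes. For each edge type we compute the change in slack:
\begin{itemize}[leftmargin=15pt]
\item A matched or parent edge joins a plus-node and a minus-node of the same tree, so its slack changes by $0$. This preserves (ii) and (iii). Internal edges of supernodes are unaffected, because a supernode's dual contributes to an edge's slack only when the edge crosses its boundary.
\item $(+,+)$, $(+,-)$ and $(+,\varnothing)$ edges across trees, and $(+,+)$ edges inside one tree, lose at most the amounts bounded in \eqref{eq:deltaT:constraints}.
\item $(-,-)$, $(-,\varnothing)$ and $(\varnothing,\varnothing)$ edges do not decrease in slack.
\item $y_v\ge 0$ for minus-supernodes follows from the last constraint in \eqref{eq:deltaT:constraints}. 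Plus-supernodes only increase.
\end{itemize}
Hence slacks stay nonnegative, and (i) holds. Sets with positive dual are exactly the top-level or nested supernodes, which gives (iv).

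\emph{Main obstacle.} I expect the hardest part to be the handling of matchings inside nested supernodes during \texttt{Augment} and \texttt{Expand}, that is, verifying that rotations keep all internal matched and parent edges tight. My first approach is to observe that every supernode boundary edge used is tight, and that internal slack is invariant under changes to the supernode's own dual.
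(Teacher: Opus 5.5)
Your proposal is correct and follows the same operation-by-operation induction as the paper's proof. Both arguments rest on the same observations: \texttt{Grow-out}, \texttt{Grow-in} and \texttt{Augment} use only tight edges; \texttt{Shrink} and \texttt{Expand} only touch zero-dual supernodes; rotation keeps exactly one internal node unmatched; and the constraints \eqref{eq:deltaT:constraints}, together with the cancelling $\pm\delta_T$ on matched and parent edges, handle the dual update. The rotation check you leave open closes at once, because the procedure in \cref{sec_rotation_proof} augments along, and re-points parents only onto, edges of the cycle $\{(\r_\calB,x)\}\cup\calP^+_x$, all of which were already parent edges and hence tight.
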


\begin{proof}
The invariants hold after initialization. The operations
\texttt{Grow-out} and \texttt{Grow-in} change only labels and parent
pointers and use only tight edges. The operation \texttt{Augment}
flips the matching status along a tight alternating path, and hence
preserves both the matching property and tightness of matched edges.

A newly created supernode has dual variable zero, so
\texttt{Shrink} does not violate dual feasibility or condition~(iv).
The operation \texttt{Expand} is applied only to a supernode whose
dual variable is zero. Receptacle rotation changes the internal
matching but continues to leave exactly one internal node unmatched.
Finally, the constraints in \cref{eq:deltaT:constraints} ensure that
a dual update preserves nonnegative edge slacks and nonnegative
supernode dual variables. Parent edges remain tight because their
endpoints receive cancelling dual changes.
\end{proof}

Thus, if Blossom VI returns a matching $M$, then $M$ is indeed a minimum weight perfect matching. This follows from primal feasibility, dual feasibility, and complementary slackness upon termination. It remains to show that Blossom VI terminates and returns $M$ assuming that a perfect matching exists. We will show that the contractions performed by
Blossom VI can be reproduced using traditional blossom contractions.

\begin{definition}[Traditional realization]
A \emph{traditional realization} of a cherry blossom $\calB$ is a
sequence of standard blossom contractions, each applied to an odd
alternating cycle of tight edges, that contracts precisely the nodes
of $\calB$ into one supernode and leaves the matching outside $\calB$
unchanged.
\end{definition}

\begin{lemma}
\label{lemma:traditional_realization}
Every cherry blossom admits a traditional realization.
\end{lemma}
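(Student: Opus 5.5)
We prove the statement by induction on the number of \texttt{Grow-in} operations that built the cherry blossom. The idea is to replay the history of the primal phase in a traditional Edmonds search. We may assume the forest was created by the primal operations of \cref{sec_algo_overview}. A cherry blossom $\calB_x$ with $|\calB_x|>1$ arises only through \texttt{Grow-in}; \texttt{Grow-out} adds purely labeled nodes, and \texttt{Expand} only creates purely plus/minus labeled nodes. \texttt{Augment} destroys the trees involved, and labels restart afterwards. It therefore suffices to maintain, along the sequence of operations within the current phase, the following auxiliary invariant. For every cherry blossom $\calB$ there is a sequence of traditional contractions whose result is a single supernode with node set exactly $\calB$ and receptacle $\r_\calB$, leaving the matching outside $\calB$ unchanged. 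Moreover, after these contractions the remaining (contracted) cherry tree is an ordinary alternating tree whose parent edges are those of the cherry forest between distinct blossoms.

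The base case is trivial blossoms, which need zero contractions. For the inductive step, consider a \texttt{Grow-in} on a tight edge $\{u,v\}$ with blossom $\calB\ni a$ and receptacle $r$. First, apply the inductive hypothesis to every blossom met by the paths $\calP^+_u[u\rightarrow r]$ and $\calP^+_v[v\rightarrow r]$, including $\calB$ itself, and contract each into a supernode. Contractions of disjoint blossoms commute and do not change the matching outside them, so this is legitimate. By \InvB, in the contracted graph the image of $\calP^+_w$ for $w\in\{u,v\}$ is an alternating path of tight parent edges. It passes through the supernodes of the blossoms it meets, entering each via a matched edge at its receptacle and leaving via its minus-parent. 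By \InvA\ it is simple. Both images reach the supernode of $\calB$, since step (i) of \texttt{Grow-in} puts $a$ in $\calB$ and the first arc entering $\calB$ is from a purely minus node $x$. Together with the tight edge $\{u,v\}$, these two paths close an odd alternating cycle through the supernode $\calB$. This is exactly Edmonds' blossom with base $\calB$: the two paths are even-length from $\calB$ to plus-nodes, and $\{u,v\}$ is the unmatched closing edge. We contract it traditionally. The resulting node set is the union of $\calB$, the blossoms along the paths, and the nodes that turned $\pm$, which is precisely the new $\calB$ after step (iii). Its receptacle is $r$, and no matching edge outside it has changed.

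We must also check that the invariant survives the other operations. \texttt{Grow-out} only attaches purely labeled nodes. \texttt{Augment} only touches destroyed trees. \texttt{Expand} acts on a top-level minus supernode that is already a traditionally built node. Receptacle rotation (\cref{lemma:rotation}) inside a supernode changes only the internal matching, so we never need to realize it traditionally.

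The main obstacle is the precise bookkeeping in the inductive step. We must show that after contracting the constituent blossoms, the parent paths really project to a simple alternating path whose internal vertices are the $\pm$-nodes newly created plus contracted earlier blossoms. In particular, the path must enter each contracted blossom through its receptacle along a matched edge and leave along an unmatched edge. Otherwise the cycle would not be alternating and the contraction would not be a valid Edmonds contraction. I would derive this from \InvB: any node $z$ on $\calP^+_w$ inside a nontrivial blossom has $\calP^+_z$ pass through $\r_z$ with all preceding nodes in that blossom, so the path segment within a blossom runs from an entry point to its receptacle. The entry point needs care, because the path may enter at a non-receptacle node via a minus-parent. To handle this, I would use the laminar/containment structure: a traditional blossom contracted to a point absorbs the segment regardless of which internal node the path enters. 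Parity is then checked using the fact that the segment from entry to receptacle has the right length parity, which comes from the plus/minus label of the entry node.
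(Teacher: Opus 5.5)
Your argument is essentially sound for the blossoms the algorithm actually builds, but it takes a different route from the paper's. You replay the phase's history. Each \texttt{Grow-in} becomes one Edmonds contraction, whose cycle consists of previously realized blossoms and the purely minus nodes on the two parent paths.

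The paper argues statically, by induction on the number of top-level nodes of $\calB$. It picks $v\in\calB\setminus\{r\}$ whose minus-parent is $(v,r)$. It contracts the odd cycle $C=\calP^+_v[v\rightarrow r]\cup\{\{v,r\}\}$, which is based at $r$, into a node $b$. It then observes that $(\calB\setminus V(C))\cup\{b\}$ is again a cherry blossom, now with receptacle $b$.

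The paper's route buys generality and economy. It uses only \InvA\ and \InvB\ at a single moment, so it proves the lemma literally for every structure satisfying the invariants, including internal forests produced by receptacle rotation. It also never has to track anything through \texttt{Grow-out}, \texttt{Expand}, \texttt{Augment}, or the pointer rewrites in step (ii) of \texttt{Grow-in}.

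Your route adds a hypothesis the lemma does not have: that $\calB$ was built from trivial blossoms by \texttt{Grow-in} since the last \texttt{Shrink}. This suffices for Algorithm~\ref{algo_main}. Only such blossoms are ever contracted, and a rotated internal forest is discarded as soon as \texttt{Expand} relabels its nodes. You should justify your remark that rotation never needs a traditional realization in exactly these terms. Strictly, though, you prove a weaker, history-dependent statement. In exchange, you get a realization that mirrors what a traditional solver would have contracted, one blossom per \texttt{Grow-in}.

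To finish your version, you need three facts, in addition to the bookkeeping you flagged. First, each blossom occupies a contiguous segment of $\calP^+_w$. In the direction of the pointers, that segment is entered by an unmatched arc and left through the receptacle's matched plus-parent. This holds because every non-receptacle node of a cherry blossom is matched inside it. Second, the two images meet only at $\calB$. This follows from $(a,b)$ being the first common arc, since a shared node earlier would force a shared arc at the exit of its blossom. Third, your invariant survives the pointer changes made by \texttt{Grow-in} and \texttt{Expand}. A realization depends only on the matching and the tight edges, not on parent pointers.
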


\begin{proof}
We use induction on the number of top-level nodes in the cherry
blossom. The base case is a trivial cherry blossom, for which the claim holds.

Let $\calB$ be nontrivial, with receptacle $r$. Choose any node $v \in \calB \setminus \{r\}$ such that $(v, r)$ is the minus-parent of $v$. Then $\{v, r\}$ is unmatched and tight. $v$ is also a plus-node. By \InvB, $\calP_v^+[v \to r]$ is an even-length alternating path from $v$ to $r$. All edges of $\calP_v^+[v \to r]$
are tight. Therefore $C := \calP_v^+[v \to r] \cup \{\{v, r\}\}$
is an odd alternating cycle of tight edges and may be
contracted as a traditional blossom.

Let $b$ denote the node obtained by contracting $C$. After replacing
$C$ by $b$ and inheriting all parent pointers outside $C$, the set $\calB'=(\calB\setminus V(C))\cup\{b\}$
is again a cherry blossom, now with receptacle $b$. Moreover, $\calB'$ has fewer
top-level nodes than $\calB$. The induction hypothesis gives a
traditional realization of $\calB'$. With the contraction of
$C$, this gives a traditional realization of $\calB$.
\end{proof}

\begin{theorem}[Correctness of Blossom VI]
\label{theorem:correctness}
If Blossom VI reaches a state with $M$ non-perfect and in which no
primal operation and no progress-making feasible dual update is
available, then the input graph has no perfect matching. If an unbounded dual update is discovered, then the input graph also has no perfect matching.
\end{theorem}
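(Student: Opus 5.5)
We reduce the stuck state of Blossom VI to a stuck state of the traditional multi-tree primal-dual algorithm, and then read off a certificate of infeasibility. Suppose Blossom VI is in a state where $M$ is non-perfect and no \texttt{Grow-out}, \texttt{Grow-in}, \texttt{Augment} or \texttt{Expand} is applicable. By Algorithm~\ref{algo_main}, all nontrivial cherry blossoms have been contracted by \texttt{Shrink}. By \cref{lemma:traditional_realization}, every supernode created this way (including nested ones) can be replaced by a sequence of traditional odd-cycle contractions that produces the same top-level node set and leaves the outer matching unchanged. The resulting structure is therefore a state that the traditional algorithm could also have reached. Its internal supernode structure differs, but this is irrelevant for the top-level dual update.

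\textbf{Case 1: no progress-making dual update.} At the top level there are no $\pm$-nodes, so the forest is an ordinary alternating forest, and \cref{lemma:primal_dual_invariants} gives dual feasibility with tight matched and parent edges. I then argue that every constraint in \eqref{eq:deltaT:constraints} that is binding at zero is \emph{not} an obstacle.
\begin{itemize}
\item A tight $(+,+)$ edge between different trees would allow \texttt{Augment}.
\item A tight $(+,+)$ edge inside one tree would lie between two different cherry blossoms (all blossoms are now singletons), so it would allow \texttt{Grow-in}.
\item A tight $(+,\varnothing)$ edge would allow \texttt{Grow-out}. The unlabeled endpoint is matched, since unmatched nodes are roots.
\item A minus-supernode with $y_v=0$ would allow \texttt{Expand}.
\end{itemize}
The only remaining zero-bounds are of the form $\delta_T-\delta_{T'}\le 0$. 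Take the trees of one connected component of the graph of such constraints, as in the Blossom IV/V strategy, and raise all their $\delta_T$ by a common $\varepsilon$. These difference constraints stay satisfied, and every other constraint has positive slack. Hence $\varepsilon>0$ is feasible, which contradicts the assumption that no progress-making update exists. So this case can only arise vacuously, i.e.\ when $M$ is perfect. I expect the main obstacle to be making this step rigorous: one must check that a component raised uniformly does not hit a $(+,+)$ edge internal to the component, which is bounded by $2\varepsilon$ and remains positive. One must also check that $(+,-)$ edges with the minus end inside the component cancel.

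\textbf{Case 2: unbounded dual update.} Here some $\delta$ can be taken arbitrarily large. Since each tree has one more plus-node than minus-node, the dual objective increases by $\sum_T\delta_T$, so the dual LP is unbounded. By weak LP duality (the Primal/Dual pair of Section~2, with blossom constraints valid for every perfect matching), the primal is infeasible. Hence no perfect matching exists.

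Finally, to finish the theorem I would combine the two cases with the known correctness and termination of the traditional algorithm~\cite{blossom_4,blossom_v}, applied via the realization above: the only ways to stop are a perfect matching or an unbounded dual, and the latter certifies infeasibility.
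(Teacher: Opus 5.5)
Your proof is correct, but for the first clause it takes a different route from the paper.

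\textbf{The paper's route.} It uses Lemma~\ref{lemma:traditional_realization} to turn the post-\texttt{Shrink} state into a valid state of the traditional multi-tree algorithm. That state has the same top-level graph, forest, slacks and dual variables. The paper then imports the known fact that a traditional state admitting neither a primal nor a dual update certifies that no perfect matching exists. Its second clause is weak duality, exactly as in your Case~2.

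\textbf{Your route.} Although you cite the realization lemma, Case~1 never uses it. You match each zero right-hand side in \eqref{eq:deltaT:constraints} with the applicability condition of \texttt{Augment}, \texttt{Grow-in}, \texttt{Grow-out} or \texttt{Expand}. This shows that a small common increase $\varepsilon>0$ on one component, or even on all trees at once, is always feasible. So the hypothesis of the first clause can never hold with $M$ non-perfect. The only way the dual phase can fail to make finite progress is unboundedness, which Case~2 handles.

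\textbf{Comparison.} Your argument is elementary and self-contained, and it makes explicit that the first clause is vacuous. The paper's simulation route delegates this to the literature, but it is the same device the paper reuses for the termination claim. That termination claim is also the only place your closing paragraph needs the realization lemma, and it lies outside the theorem as stated.

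\textbf{One small detail.} In the $(+,\varnothing)$ bullet, \texttt{Grow-out} also requires the matched partner of the unlabeled endpoint to be unlabeled. This holds because the operations never leave a labeled node matched to an unlabeled one. The paper skips this point too.
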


\begin{proof}
Consider an imperfect state after the primal phase. By
\cref{lemma:traditional_realization}, replace every contracted cherry
blossom by a nested family of traditional blossoms. This produces a
valid state of a traditional multi-tree blossom algorithm with the
same top-level contracted graph, matching, alternating forest, edge
slacks, and dual variables.

Blossom VI cannot apply any primal-phase operations, so the corresponding traditional state admits no grow,
augment, shrink, or expand operation. The feasible
dual update is determined by the same constraints
\cref{eq:deltaT:constraints}. Therefore, if Blossom VI has no
progress-making feasible dual update, neither does the traditional
algorithm. For a traditional primal-dual blossom
algorithm, a state with non-perfect $M$ with neither a primal nor a dual update possible implies that no perfect matching exists.

Unbounded dual problem implies infeasible primal problem, so if an unbounded dual update is possible, then the graph has no perfect matching.
\end{proof}

Thus, if at any point we do not discover a non-trivial dual update, or the dual update is unbounded, then we can report that no perfect matching exists in the input graph.

We can argue polynomial-time termination in the same way. The traditional multi-tree blossom algorithm, with the same dual phase procedure, terminates after polynomially many operations whenever a perfect matching exists~\cite{blossom_4,blossom_v}. Then a superpolynomial-time execution
of Blossom VI can be translated into a superpolynomial execution of the traditional
algorithm, which is impossible.

% Our algorithm deviates significantly from the usual blossom algorithms with traditional alternating trees and traditional blossoms. For this reason, we need to discuss the correctness of the algorithm.

% It is enough to prove that if no progress can be made in both the primal and the dual phases (and thus, the algorithm is stuck), then the input graph does not contain a perfect matching.

% A convenient proof strategy is to show that the behavior of our algorithm can be emulated by a traditional multi-tree blossom algorithm. Thus, it is enough to show that any cherry blossom can be represented by a traditional blossom (possibly, with sub-blossoms). Then, the state of the trees after the shrinking phase can be realized as a state of a traditional blossom algorithm. Thus, the traditional blossom algorithm would also be stuck in this state.

% Finally, it is left to show that for any cherry blossom $\calB$, there exists a sequence of shrinks of traditional blossoms that also result in shrinking $\calB$. Let $r$ be the receptacle of $\calB$. Take any node $v \in \calB - r$ such that $r$ is a minus-parent of $v$. Let $\calP$ be the even alternating path from $v$ to $r$. $\calP$ together with the minus-parent of $v$ make an odd alternating cycle of zero-slack edges. Shrink it into a traditional blossom and make the blossom the new receptacle of $\calB$. Clearly, the cherry blossom structure of the new smaller cherry blossom is still valid. Repeat this procedure until the entire cherry blossom is contracted. Thus, the correctness claim holds.

\section{Implementation remarks}
\label{section_remarks}

Our implementation includes some optimizations not discussed in the main algorithm description in \cref{sec_algo_overview}. Here, we give a list of important implementation details that were not stated before.

\begin{enumerate}
    \item \textbf{Initialization:} we tested several initialization strategies. 

    First, we considered the greedy initialization. In the very beginning, we initialize the dual variables such that all the edge slacks are nonnegative. We do this by setting $y_v = \frac{1}{2} \min\limits_{e \in \delta(v)} w_e$. Notice that there are no isolated vertices, otherwise, we can immediately report that no perfect matching exists. After that, we go through the nodes and greedily increase the associated dual variables as far as we can (i.e., until some incident edge hits zero slack). The matching is also initialized greedily: if the current node is unmatched and there is an unmatched neighbor at the end of a zero-slack incident edge, we include such an edge in the matching.

    Second, we considered the fractional matching initialization. We still start by initializing $y_v$ such that all slacks are nonnegative. After that, we solve the minimum weight fractional perfect matching problem, obtaining a half-integral solution, and then round it. This initialization was suggested in \cite{fractional_init_suggest}. We have implemented the algorithm described in \cite{fractional_init_description}, with a single alternating tree.

    Third, we considered an approximate fractional matching initialization. For the sake of initialization, we do not have to solve the fractional problem to optimality. Thus, we introduced a threshold $t$: if the alternating tree from the fractional matching algorithm grows to have more than $t$ nodes, we abandon it without performing any augmentations. We found $t = 100$ to be a good value that balances a cheap initialization and a good jumpstart for the main algorithm. We use this initialization strategy in all tests. Enabling the threshold is especially impactful in the sparse random family of instances. Potentially, implementing a fractional matching initialization with multiple trees could give another mild speedup, but we did not implement it.

    \item \textbf{Dual updates:} In the dual update phase, we need to find a collection of values $\delta_T$ to change the dual variables. The dual solution must remain feasible. In general, $\delta_i$ must satisfy the following constraints.
    \begin{equation}
        \begin{cases}
            \delta_i \geq 0 \\
            \delta_i \leq a_i \\
            \delta_i + \delta_j \leq b_{ij} \\
            \delta_i - \delta_j \leq c_{ij} \\
        \end{cases}
        \label{eq_delta_constraints}
    \end{equation}
    The coefficients $a_i$, $b_{ij}$, $c_{ij}$ come from edge slacks and minus-supernode variables. We have tested two dual update approaches, both of which were already present in Blossom V. For a more detailed description, we refer the reader to \cite{blossom_v}.

    The first approach is to compute the connected components in the graph with the edge set $\{(i, j) \; | \; c_{ij} = 0 \}$, and define $\delta_i$ to be the same for all trees in a given connected component. Once the components are computed, we go through the components and greedily increase $\delta_i$ while it is possible.

    The second approach is to solve the LP with constraints defined by \cref{eq_delta_constraints} and the objective of maximizing the dual progress, i.e., $\sum_T \delta_T$. This LP can be converted to an instance of a min-cost flow if one introduces the new variables $\delta_i^+$, $\delta_i^-$, satisfying $\delta_i = \frac{\delta_i^+ - \delta_i^-}{2}$. The optimal solution for $\delta_i$ is half-integral if the coefficients $a_i$, $b_{ij}$, $c_{ij}$ are integers. We have implemented a straightforward min-cost flow algorithm using successive shortest paths.

    The LP approach makes better dual progress in general, but is also more expensive. We tested the connected components approach against the LP approach. Since solving the LP exactly for a large number of trees is expensive, we only used it when the number of trees became less than some threshold. We experimented with thresholds of order $10^2$. Enabling the LP solutions did not affect the runtime. So, in our final version of the code, we use the connected components approach, because it is simpler.

    Another remark regarding the dual updates is the (quarter-)integrality of our dual variables. In our code, we deal with integer weights. It has long been known that there exists an integral optimal dual solution for any minimum weight perfect matching problem with integer edge weights \cite{cunningham_integral_dual}. A single-tree implementation from \cite{fractional_init_description} uses dual variables that remain half-integral throughout the algorithm. One can observe that in our case, if the current edge slacks are even and the dual variables are integer, then the coefficients $a_i$, $b_{ij}$, $c_{ij}$ are integers. 
    Furthermore, making a dual update with suitable integer $\delta_T$ values would yield even edge slacks again. For this reason, when we use the LP approach, we round the half-integer values $\delta_T$ down to integers (and fall back to the connected-components approach in case no progress was made because of the rounding). Consequently, if we quadruple all edge weights in the very beginning, then we will have integer slacks and variables throughout the algorithm.\footnote{We have quarter-integrality instead of half-integrality because of the first step of the initialization, where we set $y_v = \frac{1}{2} \min\limits_{e \in \delta(v)} w_e$. This adds an extra halving of the integer weights.} 

    \item \textbf{Lazy updates for slacks and dual variables:} in the dual update phase, for a tree $T$, we need to increase the dual variables of the top plus-nodes by $\delta_T$ and decrease the dual variables of the top minus-nodes by $\delta_T$. This operation also changes the slacks of the edges incident to the tree. Just like in Blossom V, we do this lazily in $O(1)$ time by storing a variable $y_T$ in the tree holding the sum of $\delta_T$ over all dual updates. A node remembers a lazy variable $\overline{y_v}$, and the true dual variable can be calculated as $y_v = \overline{y_v} \pm y_T$, where $T$ is the tree to which $v$ belongs, and the sign is determined by the sign of $v$. A similar lazy calculation applies to edge slacks. 
    
    In the edge heaps, the edges are compared by their lazy slacks, and nodes in the node heaps are compared by their lazy variables. This is correct, because within the same type of edges or nodes, lazy slacks or variables yield the same ordering as the true slacks or variables.

    \item \textbf{Lazy updates for heads and tails:} every edge remembers the top nodes of its endpoints. Whenever we shrink a set of nodes $S$ into a supernode, the endpoints of the edges in $\delta(S)$ change. We do not update the endpoints immediately. Rather, we allow the head/tail fields of an edge to point to non-top blossoms. Once needed, we can find the true head/tail by walking up the blossom parent pointers. The same is done in Blossom V.

    \item \textbf{Heap choice:} We compared two types of heaps for the edge heaps and the node heaps: the traditional $d$-ary array-based heap and the pairing heap. 

    The traditional array-based heap allows one to push new elements to the heap and pop the heap root in $O(\log N)$ time, where $N$ is the size of the heap. Accessing the root is $O(1)$. Such heaps are attractive, because the elements are stored contiguously in a vector, making the heap more cache-friendly than tree-based heaps. For our purposes, we need an additional operation: given an edge, remove it from the heap that it is currently in. For this, we used handles. Insertion returns a pointer to a handle of the inserted element, and one can use this handle later to locate the element in the heap and remove it. 

    The pairing heap allows us to access the minimum element and insert a new element in $O(1)$, and delete elements in $O(\log N)$ amortized time. We have implemented an intrusive pairing heap, i.e., the heap-related information is stored in the edges. Intrusiveness makes the ``given an edge, remove it'' operation more natural, since there is no need for handles. The latest version of Blossom V uses pairing heaps.

    We observed that pairing heaps performed better in our algorithm, so we use them in the final version of our code.

    \item \textbf{Path compression for blossom parents:} to calculate the top node of a given node, we do not walk the entire path of blossom parents, but use path compression. Every node remembers its blossom ancestor, which is a lazily computed top node. The same is done in Blossom V. Notice that expansions may invalidate the current blossom ancestor fields for the blossom descendants of the expanded blossom.

    \item \textbf{Lazy updates for queues of internal plus-plus edges:} after the queues are updated, all queues are correct except for (possibly) the queues of plus-plus internal edges. This means that the queues except for the queues of plus-plus internal edges contain precisely the set of non-loop edges of the needed type. Queues of plus-plus internal edges are allowed to contain plus-plus loops. Whenever we need to get the minimal slack plus-plus internal non-loop edge, we keep popping the root of the heap until it is a non-loop. This makes shrink operations cheaper. The same is done in Blossom V. 

    \item \textbf{No tree children needed:} in Blossom V and the unweighted cherry-based solver from \cite{cherry_paper}, nodes in a tree remember their children. This way, the authors of \cite{blossom_v} and \cite{cherry_paper} can traverse their trees when they need to. We notice that the only case in which we need to go through all the nodes in the tree is when the tree is destroyed. Moreover, the order of the traversal does not matter. Thus, we can just make a tree remember the list of its nodes, rather than make nodes remember their tree children. This is useful with regard to shrink operations, since we do not need to define the tree children for the supernodes created. 

    \item \textbf{Removal from dead heaps:} removing an element from a pairing heap is the only operation with pairing heaps that is not done in $O(1)$ time. We observe that some edges and nodes might be in heaps that are associated with trees that have already been destroyed. In this case, we do not need to maintain this heap in a correct state. Thus, we can afford doing incorrect remove operations in a straightforward way in $O(1)$ time from such a heap. 

    An alternative way would be to explicitly destroy all the heaps associated with a tree upon the tree destruction. However, the edges and nodes from these heaps might never be included in another heap in the future, so our lazy heap destruction is more efficient.

    \item \textbf{Incident edge traversal}: we frequently need to traverse all edges incident to a node. We tried two ways of maintaining the list of neighboring non-loop edges for supernodes.

    The first way is to store the neighboring non-loop edges in a lazily computed vector. When we need to traverse the neighbors of a given node that does not have the field computed yet, we go through the tree of the blossom descendants until we reach nodes with the neighbors field computed. Then, we populate the list of neighbors with the edges from the descendants' neighbors lists, discarding loops.

    The second way is to store the neighboring non-loop edges in a linked list. Upon {\tt Shrink} operations, we steal the neighbor lists from the blossom children and splice them together to make a list of neighbors for the new node. Later, upon the first traversal, we discard loops from the list. This is done in Blossom V.

    The first approach stores the data contiguously and does less work in {\tt Shrink} operations. The second approach uses less memory. We use the first approach in the final version of our code, since it showed slightly better performance.

    \item \textbf{$E_0$ maintenance:} recall that $E_0$ is the set of zero-slack edges. During our primal phase, we need to compute a maximum matching in the graph $(V, E_0)$. This requires traversing zero-slack incident edges for a given node. One can go through all incident edges of a node, compute the true slack for every edge, and then process only the zero-slack edges. We observed a performance improvement when we added the following trick. We make the edges remember a Boolean field \texttt{maybe\_has\_zero\_slack}. If \texttt{maybe\_has\_zero\_slack=false}, then the edge has non-zero slack. Otherwise, the edge may or may not have zero slack. When we go through incident edges of a given node, we immediately discard the edges with \texttt{maybe\_has\_zero\_slack=false}. Although computing the true slack for a given edge takes $O(1)$ time, it requires accessing the endpoint nodes and the trees they belong to, which is more expensive than just accessing the edge. It turns out that maintaining the extra field \texttt{maybe\_has\_zero\_slack} improves the performance. The field is set to \texttt{true} after the dual updates for those edges that became zero-slack. However, $E_0$ is not purely incremental, since, for example, a minus-minus edge may increase its slack. For those edges that leave $E_0$, \texttt{maybe\_has\_zero\_slack} is set to \texttt{false} lazily.
\end{enumerate}

\section{Receptacle rotation lemma}\label{sec_rotation_proof}

\Cref{lemma:rotation} was presented as Lemma 3.2 in \cite{cherry_paper}.
We believe
that the argument the authors present may be incomplete. For completeness, we give our own proof below.

\begin{proof}[Proof of \cref{lemma:rotation}]
First, traverse the path $\calP^+_v$, and for each node $u$ on this path compute the distance $d(u)$ from $v$ and store it at $u$.
Define $d(u) = \infty$ for all other nodes.
Then repeat the following operation while $v\ne\r_\calB$:
\begin{itemize}
\item Let $(x,\r_\calB)$ be the last arc of $\calP^+_v$; it must be unmatched.
Traverse the path $\calP^+_x$, find a node $u$ on this path with the minimum value of $d(u)$.
We have $d(u)\le d(x)$, so $u$ lies on $\calP^+_v[v\rightarrow x]$.

The path $\calP^+_x[x\rightarrow u]$ must contain an odd number of edges; its first and last edges are matched. To see this, let $(u, w) \in \calP_v^+$ and $(u, z) \in \calP_x^+$ be the next arcs after $u$ in the paths $\calP_v^+$ and $\calP_x^+$, respectively. We have $w \neq z$, because otherwise, $\calP^+_x$ would follow $\calP^+_v$ from $u$, and $\calP^+_x$ does not go through $(x,\r_\calB)$. Moreover, both paths go through the unique matched edge incident to $u$. $\calP^+_v$ does not go through $\{u, z\}$, therefore, $\{u, w\}$ is matched, and $w$ is on the path $\calP^+_v$ with $d(w) > d(u)$. Thus, $\calP^+_x[x\rightarrow u]$ ends with a matched arc $(w, u)$. It also starts with a matched edge, since it begins with a plus-parent of $x$. Then $\calP^+_x[x\rightarrow u]$ contains an odd number of edges.

Augment the matching along the path $\{(\r_\calB,x)\} \cup \calP^+_x[x\rightarrow u]$, making $u$ an unmatched node.
Make the old $\r_\calB$ a $\pm$-node, and $u$ a purely plus-node. 
Update all plus- and minus-parents for nodes on the cycle $C=\{(\r_\calB,x)\} \cup \calP^+_x$ (except for $u$)
to point to edges on the cycle in the appropriate direction. Update $\r_\calB$ to be $u$.
\end{itemize}

After this operation, the matching remains feasible and uses only tight edges. The $\calP^+$ and $\calP^-$ paths for the nodes on the cycle $C$ end at the new receptacle by construction; $\calP^+$ and $\calP^-$ paths starting from other nodes reach $C$ and then are channeled to the new receptacle. Thus, the cherry blossom is again in a correct state. Furthermore, the path $\calP^+_v[v\rightarrow u]$ and the cycle $C$ have a single common vertex $u$, by the choice of $u$.
Thus, the path $\calP^+_v[v\rightarrow u]$ does not change, and equals $\calP^+_v$ after the update.
Distances $d(u)$ remain valid on $\calP^+_v[v\rightarrow u]$. Every iteration strictly decreases $d(\r_\calB)$, therefore the algorithm terminates after a finite number of steps.
\end{proof}

\bibliographystyle{plain}
\bibliography{lipics-v2021-sample-article}

\end{document}